\tikzstyle{proglabel}=[shape=circle,draw,inner sep=0pt,minimum size=5mm]
\tikzstyle{tran}=[draw,->,>=stealth, rounded corners]
\lstdefinelanguage{prog}
{
morekeywords={prob, if, then, else, fi, while, do, od, true, false, and, or, skip},
sensitive = false
}
\newcommand{\Rset}{\mathbb{R}}
\newcommand{\Nset}{\mathbb{N}}
\newcommand{\Zset}{\mathbb{Z}}
\newcommand{\probm}{\mathbb{P}}
\newcommand{\expv}{\mathbb{E}}
\newcommand{\var}{\mathbf{Var}}
\newcommand{\pv}{\mathbf{v}}
\newcommand{\rv}{\mathbf{u}}
\newcommand{\condexpv}[2]{{\expv}{\left({#1}{\mid}{#2}\right)}}
\newcommand{\stopping}[1]{Z_{#1}}
\newcommand{\dpd}{q}
\newcommand{\supp}[1]{{\mathrm{supp}}{\left(#1\right)}}
\title{New Approaches for Almost-Sure Termination of Probabilistic Programs
}
\author{Mingzhang Huang\inst{1}
\and Hongfei Fu \inst{2}
\and Krishnendu Chatterjee\inst{3}}
\institute{BASICS Lab, Shanghai Jiao Tong University, China\\
  \email{mingzhanghuang@gmail.com}\and
Shanghai Jiao Tong University, China\\
  \email{fuhf@cs.sjtu.edu.cn}
\and IST Austria, Austria \\
  \email{krishnendu.chatterjee@ist.ac.at}}
\newcolumntype{L}[1]{>{\raggedright\arraybackslash}p{#1}}
\newcolumntype{C}[1]{>{\centering\arraybackslash}p{#1}}
\newcolumntype{R}[1]{>{\raggedleft\arraybackslash}p{#1}}
\begin{document}

\maketitle

\begin{abstract}
We study the almost-sure termination problem for probabilistic programs.
First, we show that supermartingales with lower bounds on conditional absolute difference provide
a sound approach for the almost-sure termination problem.
Moreover, using this approach we can obtain explicit optimal bounds on tail probabilities of non-termination
within a given number of steps.
Second, we present a new approach based on Central Limit Theorem for the almost-sure termination problem, and
show that this approach can establish almost-sure termination of programs which none of the existing approaches can handle.
Finally, we discuss algorithmic approaches for the two above methods that lead to automated
analysis techniques for almost-sure termination of probabilistic programs.
\end{abstract}


\section{Introduction}\label{sec:introduction}

\noindent{\em Probabilistic Programs.}
Probabilistic programs are classical imperative programs extended with \emph{random value
generators} that produce random values according to some desired probability
distribution~\cite{DBLP:conf/aistats/MeentYMW15,DBLP:conf/concur/Yang17,DBLP:conf/icse/GordonHNR14}.
They provide the appropriate model for a wider
variety of applications, such as analysis of stochastic network
protocols~\cite{BaierBook,prism}, robot planning~\cite{kaelbling1998planning}, etc.
General probabilistic programs induce infinite-state Markov processes with complex behaviours, so that the
formal analysis is needed in critical situations.
The formal analysis of probabilistic programs is
an active research topic across different disciplines, such as
probability theory and statistics~\cite{Kemeny,Rabin63,PazBook},
formal methods~\cite{BaierBook,prism},
artificial intelligence~\cite{LearningSurvey},
and programming languages~\cite{SriramCAV,HolgerPOPL,SumitPLDI,EGK12,DBLP:conf/popl/ChatterjeeFNH16}.

\noindent{\em Termination Problems.} In this paper, we focus on proving termination properties of probabilistic programs.
Termination is the most basic and fundamental notion of liveness for programs.
For non-probabilistic programs, the proof of termination coincides with
the construction of \emph{ranking functions}~\cite{rwfloyd1967programs},
and many different approaches exist for such construction~\cite{DBLP:conf/cav/BradleyMS05,DBLP:conf/tacas/ColonS01,DBLP:conf/vmcai/PodelskiR04,DBLP:conf/pods/SohnG91}.
For probabilistic programs the most natural and basic extensions of the
termination problem are \emph{almost-sure} termination and \emph{finite} termination.
First, the almost-sure termination problem asks whether the program
terminates with probability~1.
Second, the finite termination problem asks whether the expected
termination time is finite.
Finite termination implies almost-sure termination, while the converse
is not true in general.
Here we focus on the almost-sure termination problem.


\noindent{\em Previous Results.}
Below we describe the most relevant previous results on termination of
probabilistic programs.
\begin{compactitem}
\item \emph{finite probabilistic choices.}
First,  quantitative invariants were used in~\cite{MM04,MM05} to establish termination
for probabilistic programs with non-determinism, but restricted only to finite probabilistic choices.

\item \emph{infinite probabilistic choices without non-determinism.}
The approach in~\cite{MM04,MM05} was extended in~\cite{SriramCAV} to \emph{ranking
supermartingales} to obtain a sound (but not complete) approach
for almost-sure termination over infinite-state probabilistic programs with
infinite-domain random variables, but without non-determinism.
For countable state space probabilistic programs without non-determinism,
 the Lyapunov ranking functions provide a sound and complete
method to prove finite termination~\cite{BG05,Foster53}.

\item \emph{infinite probabilistic choices with non-determinism.}
In the presence of non-determinism, the Lyapunov-ranking-function method
as well as the ranking-supermartingale method are sound but not complete~\cite{HolgerPOPL}.
Different approaches based on martingales and proof rules have been studied
for finite termination~\cite{HolgerPOPL,DBLP:conf/esop/KaminskiKMO16}.
The synthesis of linear and polynomial ranking supermartingales have been
established~\cite{DBLP:conf/popl/ChatterjeeFNH16,DBLP:conf/cav/ChatterjeeFG16}.
Approaches for high-probability termination and non-termination has also been
considered~\cite{ChatterjeeNZ2017}.
Recently, supermartingales and lexicographic ranking supermartingales have been
considered for proving almost-sure termination of probabilistic programs~\cite{MclverMorgan2016,DBLP:journals/pacmpl/AgrawalC018}.

\end{compactitem}
Note that the problem of deciding termination of probabilistic programs is undecidable~\cite{DBLP:conf/mfcs/KaminskiK15}, and its precise
undecidability characterization has been investigated.
Finite termination of recursive probabilistic programs has also been studied through proof rules~\cite{DBLP:conf/lics/OlmedoKKM16}.

\noindent{\em Our Contributions.} Now we formally describe our contributions.
We consider probabilistic programs where all program variables are integer-valued.
Our main contributions are three folds.

\begin{compactitem}

\item {\em Almost-Sure Termination: Supermartingale-Based Approach.}
We show new results that supermartingales (i.e., not necessarily ranking supermartingales) with lower bounds on
conditional absolute difference present a sound approach for proving almost-sure termination of probabilistic programs.
Moreover, no previous supermartingale based approaches present explicit (optimal) bounds on
tail probabilities of non-termination within a given number of steps.

\item {\em Almost-Sure Termination: CLT-Based Approach.}
We present a new approach based on Central Limit Theorem (CLT) that is sound to establish
almost-sure termination.
The extra power of CLT allows one to prove probabilistic programs where no global lower bound exists for values of program variables,
while previous approaches based on (ranking) supermartingales~\cite{HolgerPOPL,DBLP:conf/popl/ChatterjeeFNH16,DBLP:conf/cav/ChatterjeeFG16,MclverMorgan2016,DBLP:journals/pacmpl/AgrawalC018}.
For example, when we consider the program $\textbf{while}~n\ge 1~\textbf{ do } n:=n+r\textbf{ od}$ and
take the sampling variable $r$ to observe the probability distribution $\probm$ such that $\probm(r=k)=\frac{1}{2^{|k|+1}}$ for all integers $k\neq 0$, then the value of $n$
could not be bounded from below during program execution; previous approaches fail on this example, while our CLT-based approach succeeds.

\item {\em Algorithmic Methods.}
We discuss algorithmic methods for the two approaches we present, showing that we not only
present general approaches for almost-sure termination, but possible automated analysis techniques as well.

\end{compactitem}


\noindent{\em Recent Related Work.}
In the recent work~\cite{MclverMorgan2016}, supermartingales are also considered for proving almost-sure termination.
The difference between our results and the paper are as follows.
First, while the paper relaxes our conditions to obtain a more general result on almost-sure termination,
our supermartingale-based approach can derive optimal tail bounds along with proving almost-sure termination.
Second, our CLT-based approach can handle programs without lower bound on values of program variables, while the result in the paper
requires a lower bound.
We also note that our supermartingale-based results are independent of the paper
(see arXiv versions~\cite{DBLP:journals/corr/McIverM16} and ~\cite[Theorem 5 and Theorem 6]{DBLP:journals/corr/ChatterjeeF17}).
A more elaborate description of related works is put in Section~\ref{sect:realtedwork}.



\newcommand{\AS}{Q}

\section{Preliminaries}\label{sect:preliminaries}

Below we first introduce some basic notations and concepts in probability theory (see e.g. the standard textbook
\cite{probabilitycambridge} for details), then present the syntax and semantics of our probabilistic programs.

\subsection{Basic Notations and Concepts}\label{sect:basicnotations}

In the whole paper, we use $\Nset$, $\Nset_0$, $\Zset$, and $\Rset$ to denote the sets of all positive integers, non-negative integers, integers, and real numbers, respectively.

\noindent{\bf Probability Space.} A \emph{probability space} is a triple $(\Omega,\mathcal{F},\probm)$, where $\Omega$ is a non-empty set (so-called \emph{sample space}), $\mathcal{F}$ is a \emph{$\sigma$-algebra} over $\Omega$ (i.e., a collection of subsets of $\Omega$ that contains the empty set $\emptyset$ and is closed under complementation and countable union) and $\probm$ is a \emph{probability measure} on $\mathcal{F}$, i.e., a function $\probm\colon \mathcal{F}\rightarrow[0,1]$ such that (i) $\probm(\Omega)=1$ and
(ii) for all set-sequences $A_1,A_2,\dots \in \mathcal{F}$ that are pairwise-disjoint
(i.e., $A_i \cap A_j = \emptyset$ whenever $i\ne j$)
it holds that $\sum_{i=1}^{\infty}\probm(A_i)=\probm\left(\bigcup_{i=1}^{\infty} A_i\right)$.
Elements of $\mathcal{F}$ are usually called \emph{events}.
We say an event $A\in\mathcal{F}$ holds \emph{almost-surely} (a.s.) if $\probm(A)=1$.

\noindent{\bf Random Variables.}~\cite[Chapter 1]{probabilitycambridge} A \emph{random variable} $X$ from a probability space $(\Omega,\mathcal{F},\probm)$
is an $\mathcal{F}$-measurable function $X\colon \Omega \rightarrow \Rset \cup \{-\infty,+\infty\}$, i.e.,
a function satisfying the condition that for all $d\in \Rset \cup \{-\infty,+\infty\}$, the set $\{\omega\in \Omega\mid X(\omega)<d\}$ belongs to $\mathcal{F}$; $X$ is \emph{bounded} if there exists a real number $M>0$ such that for all $\omega\in\Omega$, we have $X(\omega)\in\Rset$ and $|X(\omega)|\le M$.
By convention, we abbreviate $+\infty$ as $\infty$.

\noindent{\bf Expectation.} The \emph{expected value} of a random variable $X$ from a probability space $(\Omega,\mathcal{F},\probm)$, denoted by $\expv(X)$, is defined as the Lebesgue integral of $X$ w.r.t $\probm$, i.e.,
$\expv(X):=\int X\,\mathrm{d}\probm$~;
the precise definition of Lebesgue integral is somewhat technical and is
omitted  here (cf.~\cite[Chapter 5]{probabilitycambridge} for a formal definition).
In the case that the range of $X$, $\mbox{\sl ran}~X=\{d_0,d_1,\dots,d_k,\dots\}$ is countable with distinct $d_k$'s, we have
$\expv(X)=\sum_{k=0}^\infty d_k\cdot \probm(X=d_k)$.

\noindent{\bf Characteristic Random Variables.} Given random variables $X_0,\dots,X_n$ from a probability space $(\Omega,\mathcal{F},\probm)$ and a predicate $\Phi$ over $\Rset \cup \{-\infty,+\infty\}$, we denote by $\mathbf{1}_{\phi(X_0,\dots,X_n)}$ the random variable such that
$\mathbf{1}_{\phi(X_0,\dots,X_n)}(\omega)=1$ if $\phi\left(X_0(\omega),\dots,X_n(\omega)\right)$ holds, and
$\mathbf{1}_{\phi(X_0,\dots,X_n)}(\omega)=0$ otherwise.

By definition, $\expv\left(\mathbf{1}_{\phi(X_0,\dots,X_n)}\right)=\probm\left(\phi(X_0,\dots,X_n)\right)$.
Note that if $\phi$ does not involve any random variable, then $\mathbf{1}_{\phi}$ can be deemed as a constant whose value depends only on whether $\phi$ holds or not.

\noindent{\bf Filtrations
 and Stopping Times.
 } A \emph{filtration} of a probability space $(\Omega,\mathcal{F},\probm)$ is an infinite sequence $\{\mathcal{F}_n \}_{n\in\Nset_0}$ of $\sigma$-algebras over $\Omega$ such that $\mathcal{F}_n \subseteq \mathcal{F}_{n+1} \subseteq\mathcal{F}$ for all $n\in\Nset_0$.
A \emph{stopping time} (from $(\Omega,\mathcal{F},\probm)$) w.r.t $\{\mathcal{F}_n\}_{n\in\Nset_0}$ is a random variable $R:\Omega\rightarrow \Nset_0\cup\{\infty\}$ such that for every $n\in\Nset_0$, the event $R\le n$ belongs to $\mathcal{F}_n$.

\noindent{\bf Conditional Expectation.}
Let $X$ be any random variable from a probability space $(\Omega, \mathcal{F},\probm)$ such that $\expv(|X|)<\infty$.
Then given any $\sigma$-algebra $\mathcal{G}\subseteq\mathcal{F}$, there exists a random variable (from $(\Omega, \mathcal{F},\probm)$), conventionally denoted by $\condexpv{X}{\mathcal{G}}$, such that
\begin{compactitem}
\item[(E1)] $\condexpv{X}{\mathcal{G}}$ is $\mathcal{G}$-measurable, and
\item[(E2)] $\expv\left(\left|\condexpv{X}{\mathcal{G}}\right|\right)<\infty$, and
\item[(E3)] for all $A\in\mathcal{G}$, we have $\int_A \condexpv{X}{\mathcal{G}}\,\mathrm{d}\probm=\int_A {X}\,\mathrm{d}\probm$.
\end{compactitem}
The random variable $\condexpv{X}{\mathcal{G}}$ is called the \emph{conditional expectation} of $X$ given $\mathcal{G}$.
The random variable $\condexpv{X}{\mathcal{G}}$ is a.s. unique in the sense that if $Y$ is another random variable satisfying (E1)--(E3), then $\probm(Y=\condexpv{X}{\mathcal{G}})=1$.

\noindent{\bf Discrete-Time Stochastic Processes.}
A \emph{discrete-time stochastic process} is a sequence $\Gamma=\{X_n\}_{n\in\Nset_0}$ of random variables where $X_n$'s are all from some probability space (say, $(\Omega,\mathcal{F},\probm)$);
and $\Gamma$ is \emph{adapted to} a filtration $\{\mathcal{F}_n\}_{n\in\Nset_0}$ of sub-$\sigma$-algebras of $\mathcal{F}$ if for all $n\in\Nset_0$, $X_n$ is $\mathcal{F}_n$-measurable.

\noindent{\bf  Difference-Boundedness.}
A discrete-time stochastic process $\Gamma=\{X_n\}_{n\in\Nset_0}$  is \emph{difference-bounded} if there is $c\in(0,\infty)$ such that for all $n\in\Nset_0,\ $ $|X_{n+1}-X_n|\le c$ a.s..

\noindent{\bf Stopping Time $\stopping{\Gamma}$.}
Given a discrete-time stochastic process $\Gamma=\{X_n\}_{n\in\Nset_0}$ adapted to a filtration $\{\mathcal{F}_n\}_{n\in\Nset_0}$, we define the random variable $\stopping{\Gamma}$ by
$\stopping{\Gamma}(\omega):=\min\{n\mid X_{n}(\omega)\le 0\}$
where $\min\emptyset:=\infty$.
By definition, $\stopping{\Gamma}$ is a stopping time w.r.t $\{\mathcal{F}_n\}_{n\in\Nset_0}$.

\noindent{\bf Martingales.} A discrete-time stochastic process $\Gamma=\{X_n\}_{n\in\Nset_0}$ adapted to a filtration $\{\mathcal{F}_n\}_{n\in\Nset_0}$ is a \emph{martingale} (resp. \emph{supermartingale})
if for every $n\in\Nset_0$, $\expv(|X_n|)<\infty$ and it holds a.s. that
$\condexpv{X_{n+1}}{\mathcal{F}_n}=X_n$ (\mbox{resp. } $\condexpv{X_{n+1}}{\mathcal{F}_n}\le X_n$).
We refer to ~\cite[Chapter~10]{probabilitycambridge} for more details.

\noindent{\bf Discrete Probability Distributions over Countable Support.} A \emph{discrete probability distribution} over a countable set $U$ is a function $\dpd:U\rightarrow[0,1]$ such that $\sum_{z\in U}\dpd(z)=1$.
The \emph{support} of $q$, is defined as $\supp{q}:=\{z\in U\mid q(z)>0\}$.


\subsection{The Syntax and Semantics for Probabilistic Programs}

In the sequel, we fix two countable sets, the set of \emph{program variables}  and the set of \emph{sampling variables}.
W.l.o.g, these two sets are disjoint. Informally, program variables are the variables that are directly related to the control-flow and the data-flow of a program, while sampling variables reflect randomized inputs to programs.
In this paper, we consider integer-valued variables, i.e., every program variable holds an integer upon instantiation,
while every sampling variable is bound to a discrete probability distribution over integers. 
Possible extensions to real-valued variables are discussed in Section~\ref{sect:algorithmicmethods}.

\noindent{\bf The Syntax.} The syntax of probabilistic programs is illustrated by the grammar in Figure~\ref{fig:syntax}.
Below we explain the grammar.
\begin{compactitem}
\item \emph{Variables.} Expressions $\langle\mathit{pvar}\rangle$ (resp. $\langle\mathit{rvar}\rangle$) range over program (resp. sampling) variables.
\item \emph{Arithmetic Expressions.} Expressions $\langle\mathit{expr}\rangle$ (resp. $\langle\mathit{pexpr}\rangle$) range over arithmetic expressions over both program and sampling variables (resp. program variables), respectively. As a theoretical paper, we do not fix the detailed syntax for $\langle\mathit{expr}\rangle$ and $\langle\mathit{pexpr}\rangle$.
\item \emph{Boolean Expressions.} Expressions $\langle\mathit{bexpr}\rangle$ range over propositional arithmetic predicates over program variables.
\item \emph{Programs.} A program from $\langle \mathit{prog}\rangle$ could be either an assignment statement indicated by `$:=$', or `\textbf{skip}' which is the statement that does nothing, or a conditional branch indicated by the keyword `\textbf{if}', or a while-loop indicated by the keyword `\textbf{while}', or a sequential composition of statements connected by semicolon.
\end{compactitem}

\begin{figure}
\[\begin{array}{rrl}
\langle \mathit{prog}\rangle &::=&  \mbox{`\textbf{skip}'}\\
&\mid& \langle\mathit{pvar}\rangle \,\mbox{`$:=$'}\, \langle\mathit{expr} \rangle\\
&\mid&  \langle\mathit{prog}\rangle \, \text{`;'} \langle\mathit{prog}\rangle\\
& \mid& \mbox{`\textbf{if}'} \, \langle\mathit{bexpr}\rangle\,\mbox{`\textbf{then}'} \, \langle \mathit{prog}\rangle \, \mbox{`\textbf{else}'} \, \langle \mathit{prog}\rangle \,\mbox{`\textbf{fi}'}\\
&\mid&  \mbox{`\textbf{while}'}\, \langle\mathit{bexpr}\rangle \, \text{`\textbf{do}'} \, \langle \mathit{prog}\rangle \, \text{`\textbf{od}'}
\\
\langle\mathit{literal} \rangle &::=& \langle\mathit{pexpr} \rangle\, \mbox{`$\leq$'} \,\langle\mathit{pexpr} \rangle \mid \langle\mathit{pexpr} \rangle\, \mbox{`$\geq$'} \,\langle\mathit{pexpr} \rangle
\\
\langle \mathit{bexpr}\rangle &::=&  \langle \mathit{literal} \rangle \mid \neg \langle\mathit{bexpr}\rangle \mid \langle \mathit{bexpr} \rangle \, \mbox{`\textbf{or}'} \, \langle\mathit{bexpr}\rangle
\mid \langle \mathit{bexpr} \rangle \, \mbox{`\textbf{and}'} \, \langle\mathit{bexpr}\rangle
\end{array}\]
\caption{The Syntax of Probabilistic Programs}
\label{fig:syntax}
\end{figure}


\begin{remark}\label{rmk:icsefosegordonHNR14}
The syntax of our programming language is quite general and covers major features of probabilistic programming. For example, compared with a popular probabilistic-programming language from~\cite{DBLP:conf/icse/GordonHNR14},
the only difference between our syntax and theirs is that they have extra observe statements.\qed
\end{remark}

\noindent{\bf Single (Probabilistic) While Loops.} In order to develop approaches for proving almost-sure termination of probabilistic programs, we first analyze the almost-sure termination of programs with a single while loop. Then, we demonstrate that the almost-sure termination of general probabilistic programs without nested loops can be obtained by the almost-sure termination of all components which are single while loops and loop-free statements (see Section~\ref{sect:algorithmicmethods}).
Formally, a \emph{single while loop} is a program of the following form:
\begin{equation}\label{eq:swl}
\textbf{while}~\phi~\textbf{do}~Q~\textbf{od}
\end{equation}
where $\phi$ is the loop guard from $\langle \mathit{bexpr}\rangle$ and $Q$ is a loop-free program with possibly assignment statements, conditional branches, sequential composition but without while loops.
Given a single while loop, we assign the program counter ${\verb"in"}$ to the entry point of the while loop and
the program counter ${\verb"out"}$ to the terminating point of the loop.
Below we give an example of a single while loop.

\begin{example}\label{ex:runningexample}
Consider the following single while loop:
\lstset{language=prog}
\begin{lstlisting}[mathescape]
$\verb"in"$ : while $x\geq 1$ do
       $x:=x+r$
     od
$\verb"out"$:
\end{lstlisting}


where $x$ is a program variable and $r$ is a sampling variable that observes certain fixed distributions
(e.g., a two-point distribution such that $\probm(r={-1})=\probm(r={1})=\frac{1}{2}$).
Informally, the program performs a random increment/decrement on $x$ until its value is no greater than zero.
\end{example}

\noindent{\bf The Semantics.} Since our approaches for proving almost-sure termination work basically for single while loops (in Section~\ref{sect:algorithmicmethods} we extend to probabilistic programs without nested loops),
we present the simplified semantics for single while loops.

We first introduce the notion of valuations which specify current values for program and sampling variables.
Below we fix a single while loop $P$ in the form (\ref{eq:swl}) and let $X$ (resp. $R$) be the set of program (resp. sampling) variables appearing in $P$. The size of $X,R$ is denoted by $|X|,|R|$, respectively.
We impose arbitrary linear orders on both of $X,R$ so that $X=\{x_1,\dots,x_{|X|}\}$ and $R=\{r_1,\dots,r_{|R|}\}$.
We also require that for each sampling variable $r_i \in R$, a discrete probability distribution is given.
Intuitively, at each loop iteration of $P$, the value of $r_i$ is independently sampled w.r.t the distribution.

{\em Valuations.} A \emph{program valuation} is a (column) vector $\pv\in\Zset^{|X|}$.
Intuitively, a valuation $\pv$ specifies that for each $x_i\in X$,
the value assigned is the $i$-th coordinate $\pv[i]$ of $\pv$.
Likewise, a \emph{sampling valuation} is a (column) vector $\rv\in\Zset^{|R|}$. A \emph{sampling function} $\Upsilon$ is a function assigning to every sampling variable $r\in R$ a discrete probability distribution over $\Zset$.
The discrete probability distribution $\bar{\Upsilon}$ over $\Zset^{|R|}$ is defined by:
$\bar{\Upsilon}(\rv):=\prod^{|R|}_{i=1}\Upsilon(r_i)(\rv[i])$~.



For each program valuation $\pv$, we say that $\pv$ \emph{satisfies} the loop guard $\phi$, denoted by $\pv\models\phi$ , if the formula $\phi$ holds when every appearance of a program variable is replaced by its corresponding value in $\pv$. Moreover, the loop body $Q$ in $P$ encodes a function $F: \Zset^{|X|}\times \Zset^{|R|}\rightarrow  \Zset^{|X|}$ which transforms the program valuation $\pv$ before the execution of $Q$ and the independently-sampled values in $\rv$ into the program valuation $F(\pv, \rv)$ after the execution of $Q$.

\emph{Semantics of single while loops.}
Now we present the semantics of single while loops.
Informally, the semantics is defined by a Markov chain $\mathcal{M}=(S, \textbf{P})$, where the state space $S:=\{{\verb"in"}, {\verb"out"}\}\times \Zset^{|X|}$ is a set of pairs of location and sampled values
and the probability transition function $\textbf{P}: S\times S \rightarrow [0,1] $ will be clarified later.
We call states in $S$ \emph{configurations}. A \emph{path} under the Markov chain is an infinite sequence $\{(\ell_n, \pv_n)\}_{n\ge 0}$  of configurations.
The intuition is that in a path, each $\pv_n$ (resp. $\ell_n$) is the current program valuation
(the current program counter to be executed) right before the $n$-th execution step of $P$.
Then given an initial configuration $({\verb"in"}, \pv_0)$, the probability space for $P$ is constructed as the standard one for its Markov chain over paths (for details see~\cite[Chatper 10]{BaierBook}).
We shall denote by $\probm$ the probability measure (over the $\sigma$-algebra of subsets of paths)
in the probability space for $P$ (from some fixed initial program valuation $\pv_0$).

Consider any initial program valuation $\pv$. The execution of the single while loop $P$ from  $\pv$ results in a path $\{(\ell_n,\pv_n)\}_{n\in\Nset_0}$ as follows.
Initially, $\pv_0=\pv$ and $\ell_0=\verb"in"$.
Then at each step $n$, the following two operations are performed.
First, a sampling valuation $\rv_{n}$ is obtained through samplings for all sampling variables, where the value for each sampling variable observes a predefined discrete probability distribution for the variable.
Second, we clarify three cases below:
\begin{compactitem}
\item if $\ell_{n}=\verb"in"$ and $\pv_n\models\phi$, then the program enters the loop and we have $\ell_{n+1}:=\verb"in"$, $\pv_{n+1}:=F(\pv_n,\rv_n)$, and thus we simplify the executions of $Q$ as a single computation step;
\item if $\ell_{n}=\verb"in"$ and $\pv_n\not\models\phi$, then the program enters the terminating program counter $\verb"out"$ and we have $\ell_{n+1}:=\verb"out"$, $\pv_{n+1}:=\pv_n$;
\item if $\ell_{n}=\verb"out"$ then the program stays at the program counter $\verb"out"$ and we have $\ell_{n+1}:=\verb"out"$, $\pv_{n+1}:=\pv_n$.
\end{compactitem}


Based on the informal description, we now formally define the probability transition function  $\textbf{P}$:
\begin{compactitem}
\item $\textbf{P}((\verb"in",\pv),(\verb"in",\pv'))= \sum_{\rv\in\{\rv|\pv'=F(\pv,\rv)\}}\bar{\Upsilon}(\rv)$, for any $\pv,\pv'$ such that  $\pv\models\phi$;
\item $\textbf{P}((\verb"in",\pv),(\verb"out",\pv)) =1$ for any $\pv$ such that $\pv\not\models\phi$;
\item $\textbf{P}((\verb"out",\pv),(\verb"out",\pv))=1$ for any $\pv$;
\item $\textbf{P}((\ell,\pv),(\ell',\pv'))=0$ for all other cases.
\end{compactitem}
We note that the semantics for general probabilistic programs can be defined in the same principle as
for single while loops with the help of transition structures or control-flow graphs (see ~\cite{DBLP:conf/popl/ChatterjeeFNH16,DBLP:conf/cav/ChatterjeeFG16}).
%

\noindent{\bf Almost-Sure Termination.}
In the following, we define the notion of \emph{almost-sure termination} over single while loops.
Consider a single while loop $P$. The \emph{termination-time random variable} $T$ is defined such that for any path $\{(\ell_n,\pv_n)\}_{n\in\Nset_0}$, the value of $T$ at the path is $\min\{n \mid \ell_n=\verb"out"\}$, where $\min\emptyset:=\infty$.
Then $P$ is said to be \emph{almost-surely terminating} (from some prescribed initial program valuation $\pv_0$) if $\probm(T<\infty)=1$.
Besides, we also consider bounds on tail probabilities $\probm(T\ge k)$ of non-termination within $k$ loop-iterations.
Tail bounds are important quantitative aspects that characterizes how fast the program terminates.



\section{Supermartingale Based Approach}\label{sect:supermartingale}

In this section, we present our supermartingale-based approach for proving almost-sure termination of single while loops.
We first establish new mathematical results on supermartingales, then we show how to apply these results to obtain a sound approach for proving almost-sure termination.

The following proposition is our first new mathematical result.

\begin{proposition}[Difference-bounded Supermartingales]\label{thm:supm}
Consider any difference-bounded supermartingale $\Gamma=\{X_n\}_{n\in\Nset_0}$ adapted to a filtration $\{\mathcal{F}_n\}_{n\in\Nset_0}$ satisfying the following conditions:
\begin{compactenum}
\item $X_0$ is a constant random variable;
\item for all $n\in\Nset_0$, it holds for all $\omega$ that (i) $X_n(\omega)\ge 0$ and (ii) $X_n(\omega)=0$ implies  $X_{n+1}(\omega)=0$;
\item {\em Lower Bound on Conditional Absolute Difference (LBCAD).} there exists $\delta\in(0,\infty)$ such that for all $n\in\Nset_0$, it holds a.s. that $X_n>0$ implies $\condexpv{|X_{n+1}-X_n|}{\mathcal{F}_n}\ge \delta$.
\end{compactenum}
Then $\probm(\stopping{\Gamma}<\infty)=1$ and the function $k\mapsto\probm\left(\stopping{\Gamma}\ge k\right)\in \mathcal{O}\left(\frac{1}{\sqrt{k}}\right)$.
\end{proposition}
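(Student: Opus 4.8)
The plan is to reduce both assertions to the single quantitative estimate $\expv\!\left(\stopping{\Gamma}\wedge k\right)\in\mathcal{O}(\sqrt{k})$, from which the tail bound and almost-sure termination both follow cleanly, and to isolate the real work in establishing this estimate. First I would record two one-step consequences of the hypotheses. Writing $D_n:=X_{n+1}-X_n$, condition (3) together with Jensen's inequality gives $\condexpv{D_n^2}{\mathcal{F}_n}\ge\left(\condexpv{|D_n|}{\mathcal{F}_n}\right)^2\ge\delta^2$ on $\{X_n>0\}$, while the supermartingale property and $X_n\ge 0$ make the cross term $2X_n\,\condexpv{D_n}{\mathcal{F}_n}$ nonpositive. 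Hence on $\{X_n>0\}$ we get $\condexpv{X_{n+1}^2}{\mathcal{F}_n}\ge X_n^2+\delta^2$. By condition (2) the stopped process $\{X_{n\wedge \stopping{\Gamma}}\}$ is again a nonnegative supermartingale, and difference-boundedness ($X_{n\wedge\stopping{\Gamma}}\le X_0+c\,n$) keeps all expectations finite; the display above then shows that $W_n:=X_{n\wedge \stopping{\Gamma}}^2-\delta^2\,(n\wedge \stopping{\Gamma})$ is a submartingale. Taking expectations yields $\expv(X_{k\wedge \stopping{\Gamma}}^2)\ge X_0^2+\delta^2\,\expv(\stopping{\Gamma}\wedge k)$, so in particular $\expv(\stopping{\Gamma}\wedge k)\le \delta^{-2}\,\expv(X_{k\wedge \stopping{\Gamma}}^2)$.

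Second, I would show how the stated conclusions follow once $\expv(\stopping{\Gamma}\wedge k)\in\mathcal{O}(\sqrt{k})$ is known. Put $p_j:=\probm(\stopping{\Gamma}>j)$, which is non-increasing and satisfies $\expv(\stopping{\Gamma}\wedge k)=\sum_{j=0}^{k-1}p_j$. Monotonicity gives $\sum_{j=0}^{k-1}p_j\ge\sum_{j=\lfloor k/2\rfloor}^{k-1}p_j\ge\tfrac{k}{2}\,p_{k-1}\ge\tfrac{k}{2}\,\probm(\stopping{\Gamma}\ge k)$, whence $\probm(\stopping{\Gamma}\ge k)\le\tfrac{2}{k}\,\expv(\stopping{\Gamma}\wedge k)\in\mathcal{O}(1/\sqrt{k})$, which is the claimed tail bound. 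Almost-sure termination is then immediate, since $\probm(\stopping{\Gamma}=\infty)=\lim_{k\to\infty}\probm(\stopping{\Gamma}\ge k)=0$.

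Third, and this is the crux, it remains to bound $\expv(X_{k\wedge \stopping{\Gamma}}^2)$ (equivalently $\expv(\stopping{\Gamma}\wedge k)$) by $\mathcal{O}(\sqrt{k})$. I would split survival at time $k$ by a threshold $m$ of order $\sqrt{k}$. The ``high'' part is controlled by the supermartingale first-moment bound $\expv(X_{k\wedge \stopping{\Gamma}})=\expv(X_k\,\mathbf{1}_{\stopping{\Gamma}>k})\le X_0$ via Markov's inequality, giving $\probm(\stopping{\Gamma}>k,\,X_k\ge m)\le X_0/m\in\mathcal{O}(1/\sqrt{k})$. For the ``low'' part $\probm(\stopping{\Gamma}>k,\,X_k<m)$ I would combine two optional-stopping estimates furnished by the very (super/sub)martingales above: a gambler's-ruin bound $\probm(\text{$X$ reaches level $b$ before $0$}\mid\mathcal{F}_k)\le X_k/b$ from the supermartingale $\{X_{n\wedge \stopping{\Gamma}}\}$, and an expected exit-time bound $\expv(\text{exit time of the strip }(0,b))\le (b+c)^2/\delta^2$ from the submartingale $W$. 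Together these show that, conditioned on being alive and low at time $k$, the process is absorbed at $0$ within a controlled number of further steps with probability bounded away from $0$.

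The main obstacle I anticipate is closing the ``low'' part at the optimal rate. The time needed to kill a survivor of height $\asymp\sqrt{k}$ is itself $\asymp k$, so the naive renewal recursion relating $p_k$ to $p_{(1+C)k}$ fails to contract, and cruder inputs (e.g.\ Azuma--Hoeffding applied to the difference-bounded supermartingale) only deliver $\mathcal{O}\!\big(\sqrt{(\log k)/k}\big)$. Removing the logarithm to reach the optimal $\mathcal{O}(1/\sqrt{k})$ is precisely the reflection-principle content of the statement: it requires a sharp control of the ``low survivors'' (or, equivalently, of the occupation time of a bounded strip before absorption), using both the lower variance bound $\delta$ and the step bound $c$. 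I would finally note that the rate cannot be improved, as the simple symmetric random walk of Example~\ref{ex:runningexample} already has $\probm(\stopping{\Gamma}\ge k)$ of order exactly $1/\sqrt{k}$.
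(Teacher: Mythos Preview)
Your framework up through the submartingale $W_n=X_{n\wedge\stopping{\Gamma}}^2-\delta^2(n\wedge\stopping{\Gamma})$ and the reduction $\probm(\stopping{\Gamma}\ge k)\le\frac{2}{k}\expv(\stopping{\Gamma}\wedge k)$ is sound, and genuinely different from the paper's approach. But you have not closed the crux step, and you say so yourself: you need $\expv(\stopping{\Gamma}\wedge k)\in\mathcal{O}(\sqrt{k})$ (equivalently the low-survivor bound), you observe that the renewal recursion fails to contract because the kill-time for a height-$\sqrt{k}$ survivor is itself $\Theta(k)$, and you concede that the tools you have actually supplied only yield $\mathcal{O}\!\big(\sqrt{(\log k)/k}\big)$. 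Pointing to ``reflection-principle content'' does not constitute a proof; for a general difference-bounded supermartingale with only the LBCAD lower bound there is no reflection principle available, and nothing in your outline produces the sharp occupation-time control you say is required. As written, the proposal proves $\mathcal{O}\!\big(\sqrt{(\log k)/k}\big)$, not $\mathcal{O}(1/\sqrt{k})$.

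The paper avoids this bottleneck entirely by working with an exponential martingale rather than second moments. One sets
\[
Y_n:=\frac{e^{-tX_n}}{\prod_{j=0}^{n-1}\condexpv{e^{-t(X_{j+1}-X_j)}}{\mathcal{F}_j}}
\]
for a small parameter $t>0$; this is a genuine martingale, and the difference bound $c$ is used (via a third-order Taylor remainder) together with your own observation $\condexpv{D_n^2}{\mathcal{F}_n}\ge\delta^2$ to get the one-step estimate $\condexpv{e^{-tD_n}}{\mathcal{F}_n}\ge 1+\tfrac{\delta^2}{4}t^2$ on $\{X_n>0\}$. Optional stopping then gives directly
\[
e^{-tX_0}\le 1-\Bigl(1-\bigl(1+\tfrac{\delta^2}{4}t^2\bigr)^{-k}\Bigr)\,\probm(\stopping{\Gamma}\ge k),
\]
and choosing $t=1/\sqrt{k}$ yields the optimal $\mathcal{O}(1/\sqrt{k})$ tail in one stroke, with no low/high split and no recursion. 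The moral is that the exponential transform packages the ``sharp control'' you were missing: it simultaneously exploits the variance lower bound $\delta$ (quadratic term) and the step bound $c$ (to tame the remainder), which is exactly the pairing your second-moment route cannot leverage at the right rate.
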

Informally, the LBCAD condition requires that the stochastic process should have a minimal amount of vibrations at each step. The amount $\delta$ is the least amount that the stochastic process should change on its value in the next step (eg, $X_{n+1}=X_n$ is not allowed). Then it is intuitively true that if the stochastic process does not increase in expectation (i.e., a supermartingale) and satisfies the LBCAD condition, then we have at some point the stochastic processes will drop below zero.
The formal proof ideas are as follows.

\noindent{\em Key Proof Ideas.} The main idea is a thorough analysis of the
martingale
\[
Y_n:=\frac{e^{-t\cdot X_n}}{\prod_{j=0}^{n-1} \condexpv{e^{-t\cdot \left(X_{j+1}-X_{j}\right)}}{\mathcal{F}_j}} ~~(n\in\Nset_0)
\]
for some sufficiently small $t>0$ and its limit through Optional Stopping Theorem (cf. Theorem~\ref{thm:optstopping} in the appendix). We first prove that $\{Y_n\}$ is indeed a martingale.
The difference-boundedness ensures that the martingale $Y_n$ is well-defined.
Then by letting $Y_\infty:=\lim\limits_{n\rightarrow\infty} Y_{\min\{n,\stopping{\Gamma}\}}$, we prove that $\expv\left(Y_\infty\right)=\expv\left(Y_0\right)=e^{-t\cdot \expv(X_0)}$ through Optional Stopping Theorem and the LBCAD condition\enskip.
Third, we prove from basic definitions and the LBCAD condition that
\[
\expv\left(Y_\infty\right)=e^{-t\cdot\expv(X_0)}\le 1-\left(1-\left(1+\frac{\delta^2}{4}\cdot t^2\right)^{-k}\right)\cdot \probm\left( \stopping{\Gamma}\ge k\right)\enskip.
\]
By setting $t:=\frac{1}{\sqrt{k}}$ for sufficiently large $k$, one has that
\[
\probm\left(\stopping{\Gamma}\ge k\right)\le \frac{1-e^{-\frac{\expv(X_0)}{\sqrt{k}}}}{1-\left(1+\frac{\delta^2}{4}\cdot \frac{1}{k}\right)^{-k}}\enskip.
\]
It follows that $k\mapsto\probm\left(\stopping{\Gamma}\ge k\right)\in \mathcal{O}\left(\frac{1}{\sqrt{k}}\right)$.\qed

\noindent{\em Optimality of Proposition 2.} We now present two examples to illustrate two aspects of optimality of Proposition~\ref{thm:supm}. First, in Example~\ref{ex:special:supmoptimal} we show
an application on the classical symmetric random walk that the tail bound $\mathcal{O}(\frac{1}{\sqrt{k}})$ of Proposition~\ref{thm:supm} is optimal. Then in Example~\ref{ex:special:nonnegativity} we establish that the always non-negativity condition required in the second item of Proposition~\ref{thm:supm} is critical (i.e., the result does not hold without the condition).


\begin{example}\label{ex:special:supmoptimal}
Consider the family $\{Y_n\}_{n\in\Nset_0}$ of independent random variables defined as follows:
$Y_0:=1$ and each $Y_n$ ($n\ge 1$) satisfies that $\probm\left(Y_n=1\right)=\frac{1}{2}$ and  $\probm\left(Y_n=-1\right)=\frac{1}{2}$.
Let the stochastic process $\Gamma=\{X_n\}_{n\in\Nset_0}$ be inductively defined by: $X_0:=Y_0$.
$X_n$ is difference bounded since $Y_n$ is bounded.
For all $n\in\Nset_0$ we have
$X_{n+1}:=\mathbf{1}_{X_n>0}\cdot\left(X_n+Y_{n+1}\right)$.
Choose the filtration $\{\mathcal{F}_n\}_{n\in\Nset_0}$ such that every $\mathcal{F}_n$ is the smallest $\sigma$-algebra that makes $Y_0,\dots, Y_n$ measurable.
Then $\Gamma$ models the classical symmetric random walk and $X_n>0$ implies $\condexpv{|X_{n+1}-X_n|}{\mathcal{F}_n}=1$ a.s.
Thus, $\Gamma$ ensures the LBCAD condition.
From Proposition~\ref{thm:supm}, we obtain that $\probm(\stopping{\Gamma}<\infty)=1$ and $k\mapsto\probm\left(\stopping{\Gamma}\ge k\right)\in \mathcal{O}\left(\frac{1}{\sqrt{k}}\right)$.
It follows from~\cite[Theorem 4.1]{DBLP:journals/jcss/BrazdilKKV15}
that $k\mapsto\probm\left(\stopping{\Gamma}\ge k\right)\in \Omega\left(\frac{1}{\sqrt{k}}\right)$.
Hence,the tail bound $\mathcal{O}\left(\frac{1}{\sqrt{k}}\right)$ in Proposition~\ref{thm:supm} is optimal.\qed
\end{example}

\begin{example}\label{ex:special:nonnegativity}
In Proposition~\ref{thm:supm}, the condition that $X_n\ge 0$ is necessary; in other words,  it is necessary to have  $X_{\stopping{\Gamma}}=0$ rather than $X_{\stopping{\Gamma}}\le 0$ when $\stopping{\Gamma}<\infty$.
This can be observed as follows.
Consider the discrete-time stochastic processes $\{X_n\}_{n\in\Nset_0}$ and $\Gamma=\{Y_n\}_{n\in\Nset_0}$ given as follows:
\begin{compactitem}
\item the random variables $X_0,\dots,X_n,\dots$ are independent, $X_0$ is the random variable with constant value $\frac{1}{2}$ and each $X_n$ ($n\ge 1$) satisfies that $\probm\left(X_n=1\right)=e^{-\frac{1}{n^2}}$ and $\probm\left(X_n=-4\cdot n^2\right)=1-e^{-\frac{1}{n^2}}$;
\item $Y_n:=\sum_{j=0}^{n}X_j$ for $n\ge 0$.
\end{compactitem}
Let  $\mathcal{F}_n$ be the filtration which is the smallest $\sigma$-algebra that makes  $X_0,\dots,X_n$ measurable for every $n$.
Then one can show that $\Gamma$ (adapted to $\{\mathcal{F}_n\}_{n\in\Nset_0}$) satisfies integrability and the LBCAD condition,
but
$\probm\left(\stopping{\Gamma}=\infty\right)=e^{-\frac{\pi^2}{6}}>0$\enskip.
Detailed justifications are available in Appendix~\ref{app:martingale}.\qed
\end{example}

In the following, we illustrate how one can apply Proposition~\ref{thm:supm} to prove almost-sure termination of single while loops. Below we fix a single while loop $P$ in the form~(\ref{eq:swl}).
We first introduce the notion of \emph{supermartingale maps} which are a special class of functions over configurations that subjects to supermartingale-like constraints.

\begin{definition}[Supermartingale Maps]\label{def:supmmap}
A \emph{(difference-bounded) supermartingale map} (for $P$) is a function $h:\{\verb"in", \verb"out"\}\times \Zset^{|X|}\rightarrow\Rset$ satisfying that there exist real numbers $\delta,\zeta >0$ such that for all configurations $(\ell, \pv)$, the following conditions hold:
\begin{compactitem}
\item[(D1)] if $\ell=\verb"out"$ then  $h(\ell,\pv)=0$;
\item [(D2)]if $\ell=\verb"in"$ and $\pv\models\phi$, then
          (i) $h(\ell,\pv)\geq \delta$ and (ii) $h(\ell,F(\pv,\rv))\geq \delta$ for all $\rv\in \supp{\bar{\Upsilon}}$;
\item[(D3)] if $\ell= \verb"in"$ and $\pv\models\phi$ then
\begin{compactitem}
    \item[(D3.1)] $\Sigma_{\rv\in \Zset^{|R|}} \bar{\Upsilon}(\rv)\cdot h(\ell,F(\pv,\rv))\leq h (\ell,\pv) $, and
    \item[(D3.2)] $\Sigma_{\rv\in \Zset^{|R|}}\bar{\Upsilon}(\rv)\cdot |g(\ell,\pv,\rv)|\geq \delta  $   where $g(\ell,\pv,\rv):=h(\ell,F(\pv,\rv))-h (\ell,\pv) $;
    \end{compactitem}
    \item[(D4)] (for difference-boundedness) $|g(\verb"in",\pv,\rv)|\leq \zeta$ for all $\rv\in \supp{\bar{\Upsilon}}$ and $\pv\in \Zset^{|X|}$ such that $\pv\models\phi$, and $h(\verb"in", F(\pv,\rv))\le \zeta$ for all $\pv\in \Zset^{|X|}$ and $\rv\in\supp{\bar{\Upsilon}}$ such that $\pv\models\phi$ and $F(\pv,\rv)\not\models\phi$.
\end{compactitem} 
Thus, $h$ is a supermartingale map if conditions (D1)--(D3) hold. Furthermore, $h$ is difference bounded if in extra (D4) holds.
\end{definition}
Intuitively, the conditions (D1),(D2) together ensure non-negativity for the function $h$. Moreover, the difference between ``$=0$'' in (D1) and ``$\ge\delta$'' in (D2) ensures that $h$ is positive iff the program still executes in the loop.
The condition (D3.1) ensures the supermartingale condition for $h$ that the next expected value does not increase,
while the condition (D3.2) says that the expected value of the absolute change between the current and the next step is at least $\delta$, relating to the same amount in the LBCAD condition.
Finally, the condition (D4) corresponds to the difference-boundedness in supermartingales in the sense that it requires the change of value both after the loop iteration and right before the termination of the loop should be bounded by the upper bound $\zeta$.

Now we state the main theorem of this section which says that the existence of a difference-bounded supermartingale map implies almost-sure termination.

\begin{theorem}[Soundness]\label{thm:soundness}
If there exists a difference-bounded supermartingale map $h$ for $P$, then for any initial valuation $\pv_0$
we have $\probm(T< \infty)=1$ and $k\mapsto\probm(T\geq k)\in \mathcal{O}\left(\frac{1}{\sqrt{k}}\right)$.
\end{theorem}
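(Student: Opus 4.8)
The plan is to reduce the statement to Proposition~\ref{thm:supm} by turning the supermartingale map $h$ into a concrete stochastic process along executions of $P$. Fix an initial valuation $\pv_0$. If $\pv_0\not\models\phi$ the loop terminates in one step and both claims hold trivially, so I would assume $\pv_0\models\phi$. Letting $\{(\ell_n,\pv_n)\}_{n\in\Nset_0}$ denote the random execution path, I would take $\{\mathcal{F}_n\}_{n\in\Nset_0}$ to be the natural filtration generated by the history $(\ell_0,\pv_0),\dots,(\ell_n,\pv_n)$ and define $\Gamma=\{X_n\}_{n\in\Nset_0}$ by $X_n:=h(\ell_n,\pv_n)$. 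The goal is then to check that $\Gamma$ is a difference-bounded supermartingale satisfying conditions 1--3 of Proposition~\ref{thm:supm} and that its stopping time $\stopping{\Gamma}$ coincides with the termination time $T$; the conclusion then follows verbatim from Proposition~\ref{thm:supm}, since $\probm(\stopping{\Gamma}<\infty)=1$ and $\probm(\stopping{\Gamma}\ge k)\in\mathcal{O}(1/\sqrt{k})$ translate directly into the claimed statements about $T$.

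First I would verify the structural hypotheses. Condition~1 holds because $\pv_0$ is fixed and $\ell_0=\verb"in"$, so $X_0=h(\verb"in",\pv_0)$ is constant. For the supermartingale property I would split on the three transition cases of the semantics: when $\ell_n=\verb"out"$, or when $\ell_n=\verb"in"$ with $\pv_n\not\models\phi$, the next value $X_{n+1}$ is $0$, so $\condexpv{X_{n+1}}{\mathcal{F}_n}=0\le X_n$; when $\ell_n=\verb"in"$ and $\pv_n\models\phi$, the Markov property gives $\condexpv{X_{n+1}}{\mathcal{F}_n}=\sum_{\rv}\bar{\Upsilon}(\rv)\cdot h(\verb"in",F(\pv_n,\rv))$, which by (D3.1) is $\le X_n$. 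Difference-boundedness with bound $\zeta$ follows from the two parts of (D4), which bound $|g(\verb"in",\pv_n,\rv)|$ on a looping step and $h(\verb"in",F(\pv_n,\rv))$ on an exiting step, respectively; integrability $\expv(|X_n|)<\infty$ is then immediate by induction from $|X_{n+1}-X_n|\le\zeta$ and $X_0$ constant. For the LBCAD condition I would again split on the same cases: on a looping step the conditional absolute difference equals $\sum_{\rv}\bar{\Upsilon}(\rv)\cdot|g(\verb"in",\pv_n,\rv)|\ge\delta$ by (D3.2), and on an exiting step $(\verb"in",\pv_n)$ with $\pv_n\not\models\phi$ it equals $X_n$, which is $\ge\delta$ for the reason explained next.

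The delicate point, and the step I expect to be the main obstacle, is matching $\stopping{\Gamma}$ with $T$, i.e.\ showing $X_n>0$ exactly while $n<T$ and $X_n=0$ once $n\ge T$. The ``$X_n=0$ for $n\ge T$'' half, together with the absorption requirement in condition 2(ii), is immediate from (D1) and the fact that \verb"out" is absorbing. The subtle half is strict positivity for $n<T$: for a configuration $(\verb"in",\pv_n)$ with $\pv_n\models\phi$ this is exactly (D2)(i), but the execution also passes, at step $n=T-1$, through an intermediate configuration $(\verb"in",\pv_n)$ with $\pv_n\not\models\phi$ where the location is still \verb"in" although the loop is about to exit. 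Here neither (D1) nor (D2)(i) applies, and this is precisely where (D2)(ii) is needed: such a $\pv_n$ is reachable only as $\pv_n=F(\pv_{n-1},\rv_{n-1})$ for some $\pv_{n-1}\models\phi$ and $\rv_{n-1}\in\supp{\bar{\Upsilon}}$, so clause (D2)(ii) forces $X_n=h(\verb"in",\pv_n)\ge\delta>0$. This simultaneously completes the LBCAD verification on the exiting step and guarantees that $X_n$ drops to $0$ only at the moment the location becomes \verb"out"; hence $\stopping{\Gamma}=T$ along almost every path, and an appeal to Proposition~\ref{thm:supm} finishes the argument.
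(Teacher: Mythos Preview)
Your proposal is correct and follows essentially the same approach as the paper's proof: define $X_n := h(\ell_n,\pv_n)$, verify the hypotheses of Proposition~\ref{thm:supm} via the case split dictated by the semantics, and identify $\stopping{\Gamma}$ with $T$. Your explicit discussion of why clause (D2)(ii) is needed at the intermediate configuration $(\verb"in",\pv_n)$ with $\pv_n\not\models\phi$ is in fact more careful than the paper's rather terse treatment of this point; the only minor slip is that in the difference-boundedness check for the exiting step the relevant bound from (D4) applies to $h(\verb"in",F(\pv_{n-1},\rv_{n-1}))$, i.e.\ the previous step's output, rather than $F(\pv_n,\rv)$.
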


\noindent{\em Key Proof Ideas.}
  Let $h$ be any  difference-bounded supermartingale map $h$ for the single while loop program $P$, $\pv$ be any initial valuation and $\delta, \zeta$ be the parameters in Definition~\ref{def:supmmap}. We define the stochastic process $\Gamma=\{X_n\}_{n\in\Nset_0}$ adapted to $\{\mathcal{F}_n\}_{n\in\Nset_0}$  by
  $X_n=h(\ell_n,\pv_n)$ where $\ell_n$ (resp. $\pv_n$) refers to the random variable (resp. the vector of random variables) for the program counter (resp. program valuation) at the $n$th step. Then $P$ terminates iff $\Gamma$ stops. We prove that $\Gamma$ satisfies the conditions in Proposition~\ref{thm:supm}, so that $P$ is almost-surely terminating with the same tail bound.

Theorem~\ref{thm:soundness} suggests that to prove almost-sure termination, one only needs to find a difference-bounded supermartingale map.

\begin{remark}
Informally, Theorem~\ref{thm:soundness} can be used to prove almost-sure termination of while loops where there exists a distance function (as a supermartingale map) that measures the distance of the loop to termination, for which the distance does not increase in expectation and is changed by a minimal amount in each loop iteration. The key idea to apply Theorem~\ref{thm:soundness} is to construct such a distance function.
\end{remark}
Below we illustrate an example.

\begin{example}\label{ex:dbsupmmap}
  Consider the single while loop in Example~\ref{ex:runningexample} where the distribution for $r$ is given as $\probm(r=1)=\probm(r=-1)=\frac{1}{2}$ and this program can be viewed as non-biased random walks. The program has infinite expected termination so previous approach based on ranking supermartingales cannot apply. Below we prove the almost-sure termination of the program.
  We define the difference-bounded supermartingale map $h$ by: $h(\verb"in",x)=x+1$ and $h(\verb"out",x)=0$ for every $x$.  Let $\zeta=\delta=1$. Then for every $x$, we have that
  \begin{compactitem}
  \item the condition (D1) is valid by the definition of $h$;
  \item if $\ell=\verb"in"$ and $x\geq 1$, then $h(\ell,x)=x+1\geq \delta$ and $h(\verb"in",F(x,u))=F(x,u)+1\geq x-1+1\geq \delta$ for all $u\in\supp{\bar{\Upsilon}}$. Then the condition (D2) is valid;
  \item  if $\ell=\verb"in"$ and $x\geq 1$, then $\Sigma_{u\in \Zset} \bar{\Upsilon}(u)\cdot h(\verb"in",F(x,u))
   =\frac{1}{2}((x+2)+x)\leq x+1= h (\verb"in",x) $ and
   $\Sigma_{u\in \Zset}\bar{\Upsilon}(u)\cdot |g(\verb"in",x,u)|
   =\frac{1}{2}(1+1)\geq \delta  $. Thus, we have that the condition (D3) is valid.
   \item The condition (D4) is clear as the difference is less than $1=\zeta$.
   \end{compactitem}
   It follows that $h$ is a difference-bounded supermartingale map.
   Then by Theorem~\ref{thm:soundness} it holds that the program terminates almost-surely under any initial value with tail probabilities bounded by reciprocal of square root of the thresholds.
   By similar arguments, we can show that the results still hold when we consider that the distribution of $r$ in general has bounded range, non-positive mean value and non-zero variance by letting $h(\verb"in", x)=x+K$ for some sufficiently large constant $K$.\qed
\end{example}

Now we extend Proposition~\ref{thm:supm} to general supermartingales. The extension lifts the difference-boundedness condition but derives with a weaker tail bound.

\begin{proposition}[General Supermartingales]\label{thm:supmextended}
Consider any supermartingale $\Gamma=\{X_n\}_{n\in\Nset_0}$ adapted to a filtration $\{\mathcal{F}_n\}_{n\in\Nset_0}$ satisfying the following conditions:
\begin{compactenum}
\item $X_0$ is a constant random variable;
\item for all $n\in\Nset_0$, it holds for all $\omega$ that (i) $X_n(\omega)\ge 0$ and (ii) $X_n(\omega)=0$ implies  $X_{n+1}(\omega)=0$;
\item {\em (LBCAD).} there exists $\delta\in(0,\infty)$ such that for all $n\in\Nset_0$, it holds a.s. that $X_n>0$ implies $\condexpv{|X_{n+1}-X_n|}{\mathcal{F}_n}\ge \delta$.
\end{compactenum}
Then $\probm(\stopping{\Gamma}<\infty)=1$ and the function $k\mapsto\probm\left(\stopping{\Gamma}\ge k\right)\in \mathcal{O}\left(k^{-\frac{1}{6}}\right)$.
\end{proposition}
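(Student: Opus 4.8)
The plan is to revisit the proof of Proposition~\ref{thm:supm} and pin down exactly where difference-boundedness was used, since the two propositions share hypotheses (1)--(3) and differ only in that one bound. I would reuse the very same exponential martingale
\[
Y_n:=\frac{e^{-t\cdot X_n}}{\prod_{j=0}^{n-1}\condexpv{e^{-t\cdot(X_{j+1}-X_j)}}{\mathcal{F}_j}}\qquad(n\in\Nset_0)
\]
for a parameter $t>0$ fixed later. The first thing to verify is that $Y_n$ is still well-defined and, ideally, bounded. Here condition~(2) does the work that difference-boundedness did before: since $X_{j+1}\ge 0$ we have $e^{-t(X_{j+1}-X_j)}=e^{tX_j}\cdot e^{-tX_{j+1}}\le e^{tX_j}$, so each factor $\condexpv{e^{-t(X_{j+1}-X_j)}}{\mathcal{F}_j}=e^{tX_j}\cdot\condexpv{e^{-tX_{j+1}}}{\mathcal{F}_j}$ is finite, and in fact lies in $[1,e^{tX_j}]$ (the lower bound by conditional Jensen and the supermartingale property). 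Consequently $0\le Y_n\le 1$, so $\{Y_n\}$ is a bounded martingale irrespective of how large the increments are.

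Second, I would establish the factor estimate $\condexpv{e^{-t(X_{j+1}-X_j)}}{\mathcal{F}_j}\ge 1+c\,\delta^2 t^2$ on $\{X_j>0\}$ for an absolute constant $c>0$. Writing $D:=X_{j+1}-X_j$, $D^{+}:=\max\{D,0\}$, $D^{-}:=\max\{-D,0\}$ and using the convexity bounds $e^{-tD}-1+tD\ge 0$ and, for $D<0$, $e^{-tD}-1+tD\ge \tfrac12 t^2 D^2$, together with $\condexpv{D}{\mathcal{F}_j}\le 0$, gives $\condexpv{e^{-tD}}{\mathcal{F}_j}-1\ge \tfrac12 t^2\,\condexpv{(D^{-})^2}{\mathcal{F}_j}$. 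The point is that no \emph{upper} bound on $D$ is needed: LBCAD and the supermartingale property force $\condexpv{D^{-}}{\mathcal{F}_j}\ge\tfrac12\condexpv{|D|}{\mathcal{F}_j}\ge\delta/2$ (the negative part dominates), and Cauchy--Schwarz then yields $\condexpv{(D^{-})^2}{\mathcal{F}_j}\ge(\condexpv{D^{-}}{\mathcal{F}_j})^2\ge\delta^2/4$. So the same quadratic-in-$t$ lower bound as in Proposition~\ref{thm:supm} survives unbounded increments, because the downward part of any increment is automatically bounded by the current value.

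With these two ingredients I would apply optional stopping at the bounded time $\stopping{\Gamma}\wedge k$. On $\{\stopping{\Gamma}\ge k\}$ every factor of index $<k$ obeys the estimate, so $Y_{\stopping{\Gamma}\wedge k}=Y_k\le(1+c\delta^2t^2)^{-k}$, whereas on $\{\stopping{\Gamma}<k\}$ I only use $Y_{\stopping{\Gamma}\wedge k}\le 1$; since $\expv(Y_{\stopping{\Gamma}\wedge k})=\expv(Y_0)=e^{-t\expv(X_0)}$, rearranging yields
\[
\probm(\stopping{\Gamma}\ge k)\le\frac{1-e^{-t\expv(X_0)}}{1-(1+c\delta^2t^2)^{-k}},
\]
exactly the shape obtained in Proposition~\ref{thm:supm}, and taking $t=k^{-1/2}$ makes the right-hand side $\mathcal{O}(k^{-1/2})$, which in particular forces $\probm(\stopping{\Gamma}=\infty)=\lim_k\probm(\stopping{\Gamma}\ge k)=0$. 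This is stronger than the stated $\mathcal{O}(k^{-1/6})$, so it would establish the proposition a fortiori.

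The step I expect to be the crux is the optional-stopping/limiting argument, and it is the only place difference-boundedness genuinely entered Proposition~\ref{thm:supm}: the original route passes to $Y_\infty=\lim_n Y_{\min\{n,\stopping{\Gamma}\}}$ and needs $\expv(Y_\infty)=\expv(Y_0)$, which requires uniform integrability and hence an increment bound. My plan avoids this entirely by stopping at the finite horizon $\stopping{\Gamma}\wedge k$, where optional stopping is automatic for the bounded martingale $Y$. Should some unforeseen obstruction block this clean route, the fallback is the infinite-horizon argument with the unbounded increments tamed by truncation --- stopping once the process first exceeds a level $M$ and controlling the discarded mass via the maximal inequality $\probm(\sup_n X_n\ge M)\le\expv(X_0)/M$ for nonnegative supermartingales --- in which case the lossy trade-off among $M$, the rate $t$, and the horizon $k$ is, I suspect, precisely what degrades the exponent to the $k^{-1/6}$ recorded in the statement.
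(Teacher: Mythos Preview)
Your proposal is correct and in fact sharper than the paper's argument: it delivers $\probm(\stopping{\Gamma}\ge k)\in\mathcal{O}(k^{-1/2})$, matching the difference-bounded case, whereas the paper only proves $\mathcal{O}(k^{-1/6})$.

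The paper takes precisely the ``fallback'' route you sketched at the end. It introduces the auxiliary stopping time $R_M:=\min\{n:X_n\le 0\text{ or }X_n\ge M\}$ and the truncated process $X'_n:=X_n\wedge M$, verifies that $\{X'_n\}$ is again a supermartingale with an LBCAD-type lower bound, and reruns the exponential-martingale analysis of Proposition~\ref{thm:supm} on $X'$. Because the factor estimate there is obtained via the full Taylor expansion, controlling the tail $\sum_{j\ge 3}(Mt)^j/j!$ forces $t\lesssim M^{-3}$, so one needs $k\gtrsim M^6$ to get $\probm(R_M\ge k)\lesssim k^{-1/2}$; combining with $\probm(X_{R_M}\ge M)\le\expv(X_0)/M$ and optimising $M\sim k^{1/6}$ yields the stated $k^{-1/6}$.

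Your route bypasses all of this. The two observations that make it work are: (i) by non-negativity and conditional Jensen (or, equivalently, the pointwise bound $e^{-tD}\ge 1-tD$ together with $\condexpv{D}{\mathcal{F}_j}\le 0$), each denominator factor is $\ge 1$, so $0\le Y_n\le 1$ with no appeal to increment bounds; and (ii) the factor estimate needs only the one-sided inequality $e^{x}\ge 1+x+\tfrac12 x^{2}$ for $x\ge 0$, applied to $x=-tD$ on $\{D\le 0\}$, which gives $\condexpv{e^{-tD}}{\mathcal{F}_j}\ge 1+\tfrac12 t^{2}\condexpv{(D^{-})^{2}}{\mathcal{F}_j}$ with no truncation of the upward part. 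Stopping at the bounded time $\stopping{\Gamma}\wedge k$ then avoids any limiting or uniform-integrability issue. What your approach buys is both simplicity (no second stopping time, no truncation, no balancing of parameters) and the optimal $k^{-1/2}$ tail bound; what the paper's approach buys is a direct reduction to the already-proved difference-bounded case, at the cost of the weaker exponent.
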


\noindent{\em Key Proof Ideas.} The key idea is to extend the proof of Proposition~\ref{thm:supm} with the stopping times $R_M$'s ($M\in (\expv(X_0),\infty)$) defined by $R_M(\omega):=  \min\{n\mid X_{n}(\omega)\le 0\mbox{ or } X_{n}(\omega)\ge M\}$\enskip. For any $M>0$, we first define a new stochastic process $\{X'_n\}_n$ by $X'_n=\min\{X_n, M\}$ for all $n\in\Nset_0$\enskip. Then we define the discrete-time stochastic process $\{Y_n\}_{n\in\Nset_0}$ by
\[
Y_n:=\frac{e^{-t\cdot X'_n}}{\prod_{j=0}^{n-1} \condexpv{e^{-t\cdot \left(X'_{j+1}-X'_{j}\right)}}{\mathcal{F}_j}}
\]
for some appropriate positive real number $t$.
We prove that $\{Y_n\}_{n\in\Nset_0}$ is still a martingale.
Then from Optional Stopping Theorem, by letting $Y_\infty:=\lim\limits_{n\rightarrow\infty} Y_{\min\{n, R_M\}}$,
we also have $\expv\left(Y_\infty\right)=\expv\left(Y_0\right)=e^{-t\cdot \expv(X_0)}$\enskip.
Thus, we can also obtain similarly that
\[
\expv\left(Y_\infty\right)=e^{-t\cdot\expv(X_0)}\le 1-\left(1-\left(1+\frac{\delta^2}{16}\cdot t^2\right)^{-k}\right)\cdot \probm\left( R_M\ge k\right)\enskip.
\]
For $k\in \Theta(M^6)$ and $t=\frac{1}{\sqrt{k}}$, we obtain $\probm\left(R_M\ge k\right)\in \mathcal{O}(\frac{1}{\sqrt{k}})$\enskip. Hence, $\probm\left(R_M=\infty\right)=0$.
By Optional Stopping Theorem, we have $\expv(X_{R_M})\le \expv(X_0)$. Furthermore, we have by Markov's Inequality that
$\probm(X_{R_M}\ge M)\le \frac{\expv(X_{R_M})}{M}\le \frac{\expv(X_0)}{M}$\enskip.
Thus, for sufficiently large $k$ with $M\in \Theta(k^{\frac{1}{6}})$, we can deduce that
$\probm(\stopping{\Gamma}\ge k) \le \probm(R_M\ge k) + \probm(X_{R_M}\ge M)\in \mathcal{O}(\frac{1}{\sqrt{k}}+\frac{1}{\sqrt[6]{k}})$.\qed

\begin{remark}
Similar to Theorem~\ref{thm:soundness}, we can establish a soundness result for general supermartingales. The result simply says that the existence of a (not necessarily difference-bounded) supermartingale map implies almost-sure termination and a weaker tail bound $\mathcal{O}(k^{-\frac{1}{6}})$.
\end{remark}

The following example illustrates the application of Proposition~\ref{thm:supmextended} on a single while loop with unbounded difference.


\begin{example}\label{ex:dbsupmmapextended}
Consider the following single while loop program

\lstset{language=prog}
\begin{lstlisting}[mathescape]
$\verb"in"$ : while $x\geq 1$ do
       $x:=x+r\cdot\lfloor\sqrt{x}\rfloor$
     od
$\verb"out"$:
\end{lstlisting}

where the distribution for $r$ is given as $\probm(r=1)=\probm(r=-1)=\frac{1}{2}$.
The supermartingale map $h$ is defined as the one in Example~\ref{ex:dbsupmmap}.
   In this program, $h$ is not difference-bounded as $\lfloor\sqrt{x}\rfloor$ is not bounded. Thus, $h$ satisfies the conditions except (D4) in Definition~\ref{def:supmmap}.
   We now construct a stochastic process $\Gamma=\{X_n=h(\ell_n,\pv_n)\}_{n\in\Nset_0}$ which meets the requirements of Proposition~\ref{thm:supmextended}.
   It follows that the program terminates almost-surely under any initial value with tail probabilities bounded by $\mathcal{O}\left(k^{-\frac{1}{6}}\right)$.
   In general, if $r$ observes a distribution with bounded range $[-M, M]$, non-positive mean and non-zero variance, then we can still prove the same result as follows. We choose a sufficiently large constant $K\geq\frac{M^2}{4}+1$ so that the function $h$ with $h(\verb"in", x)=x+K$ is still a supermartingale map since the non-negativity of  $h(\verb"in",x)= x-M\cdot\sqrt{x}+K=(\sqrt{x}-\frac{M}{2})^2-\frac{M^2}{4}+K\ge -\frac{M^2}{4}+K$ for all $x\ge 0$. \qed
\end{example}


\section{Central Limit Theorem Based Approach}\label{sect:central}

We have seen in the previous section a supermartingale-based approach for proving almost-sure termination.
However by Example~\ref{ex:special:nonnegativity}, an inherent restriction is that the supermartingale should be non-negative. In this section, we propose a new approach through Central Limit Theorem that can drop this requirement but requires in extra an independence condition.

We first state the well-known Central Limit Theorem~\cite[Chapter 18]{probabilitycambridge}.

\begin{theorem}[Lindeberg-L\'{e}vy's Central Limit Theorem]\label{thm:lindclt}
  Suppose $\{X_1,X_2,\ldots\}$ is a sequence of independent and identically distributed random variables with $\expv(X_i)=\mu$ and $\var(X_i)=\sigma^2>0$ is finite.
  Then as $n$ approaches infinity, the random variables
$\sqrt{n}((\frac{1}{n}\sum_{i=1}^{n}X_i)-\mu)$ converge in distribution to a normal $(0,\sigma^2)$.
  In the case $\sigma>0$, we have for every real number $z$
  \[\lim_{n\rightarrow\infty}\probm(\sqrt{n}((\frac{1}{n}\sum_{i=1}^{n}X_i)-\mu)\leq z)=\Phi(\frac{z}{\sigma}),\]
  where $\Phi(x)$ is the standard normal cumulative distribution functions evaluated at $x$.
\end{theorem}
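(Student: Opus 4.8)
The plan is to establish the convergence in distribution by the method of characteristic functions, which is the standard route to the Lindeberg--L\'{e}vy theorem. First I would reduce to the centered case: setting $Y_i := X_i - \mu$, each $Y_i$ satisfies $\expv(Y_i)=0$ and $\var(Y_i)=\sigma^2$, and the quantity of interest rewrites as $S_n := \sqrt{n}\left(\frac{1}{n}\sum_{i=1}^{n} X_i - \mu\right) = \frac{1}{\sqrt{n}}\sum_{i=1}^{n} Y_i$. Thus it suffices to show that $S_n$ converges in distribution to a normal $(0,\sigma^2)$, after which the displayed conclusion $\lim_{n\to\infty}\probm(S_n\le z)=\Phi(z/\sigma)$ follows by evaluating the limiting law at its continuity points.

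Next I would introduce the characteristic function $\varphi(t):=\expv(e^{\mathrm{i}t Y_1})$ of a single centered summand, where $\mathrm{i}$ is the imaginary unit. The key analytic input is a second-order Taylor expansion at the origin: because $\expv(Y_1^2)=\sigma^2<\infty$, one obtains $\varphi(t)=1-\frac{\sigma^2}{2}t^2+o(t^2)$ as $t\to 0$. By independence and identical distribution, the characteristic function of $S_n$ factorizes as $\varphi_{S_n}(t)=\left(\varphi\!\left(t/\sqrt{n}\right)\right)^{n}$. Substituting the expansion with $t/\sqrt{n}$ in place of $t$ gives $\varphi_{S_n}(t)=\left(1-\frac{\sigma^2 t^2}{2n}+o(1/n)\right)^{n}$, which converges pointwise, for each fixed $t$, to $e^{-\sigma^2 t^2/2}$; this is precisely the characteristic function of the $N(0,\sigma^2)$ distribution.

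Finally I would invoke L\'{e}vy's continuity theorem, which I take as a known result: since $\varphi_{S_n}(t)\to e^{-\sigma^2 t^2/2}$ pointwise and the limit is continuous at $t=0$, the laws of $S_n$ converge weakly to $N(0,\sigma^2)$. Because the Gaussian cumulative distribution function is continuous everywhere, every real number $z$ is a continuity point of the limit, so weak convergence yields $\lim_{n\to\infty}\probm(S_n\le z)=\Phi(z/\sigma)$ for all $z$, as claimed.

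The main obstacle is the rigorous justification of the two limiting steps, neither of which is purely formal. The Taylor expansion of $\varphi$ must be derived from the existence of the second moment alone, without assuming any higher moment; this is handled by dominated convergence applied to the elementary remainder bound $|e^{\mathrm{i}x}-1-\mathrm{i}x+\frac{x^2}{2}|\le \min\{|x|^3/6,\,x^2\}$, which keeps the integrand dominated by the integrable majorant $x^2$. Passing from $(1+c_n/n)^{n}$ to $e^{c}$ for complex $c_n\to c$ then requires the careful lemma $\left|\prod_{j} z_j-\prod_{j} w_j\right|\le \sum_{j}|z_j-w_j|$ valid for complex numbers of modulus at most $1$, applied with $z_j=\varphi(t/\sqrt{n})$ and $w_j=1-\frac{\sigma^2 t^2}{2n}$ to control the discrepancy between the true product and the Gaussian limit. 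These two estimates are where the real work lies; the remaining reductions and the appeal to the continuity theorem are routine.
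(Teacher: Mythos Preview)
Your proof is the standard characteristic-function argument and is correct as outlined; the Taylor expansion under the second-moment hypothesis, the product-of-complex-numbers estimate, and the appeal to L\'{e}vy's continuity theorem are exactly the right ingredients, and you have identified the genuine technical points.

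However, there is nothing to compare against: the paper does not prove this theorem at all. It is stated as ``the well-known Central Limit Theorem'' with a citation to \cite[Chapter~18]{probabilitycambridge} and then used as a black box in the proof of Lemma~\ref{thm:clt}. So your proposal supplies a proof where the paper simply invokes a textbook result; in that sense you have done strictly more than the paper, but the comparison the task asks for is vacuous here.
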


The following lemma is key to our approach, proved by Central Limit Theorem.

\begin{lemma}\label{thm:clt}
  Let $\{R_n\}_{n\in\Nset}$ be a sequence of independent and identically distributed random variables with expected value $\mu=\expv(R_n)\leq 0$ and finite variance $\var(R_n)=\sigma^2>0$ for every $n\in\Nset$. For every $x\in \Rset$, let $\Gamma=\{X_n\}_{n\in\Nset_0}$ be a discrete-time stochastic process, where $X_0=x$ and $X_{n}=x+\Sigma^{n}_{k=1}R_k$ for $n\geq 1$.  Then there exists a constant $p>0$, for any $x$, we have $\probm(Z_\Gamma<\infty)\geq p$.
\end{lemma}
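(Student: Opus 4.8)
The plan is to lower-bound $\probm(\stopping{\Gamma}<\infty)$ by the probability that the walk $X_n = x + \sum_{k=1}^{n}R_k$ has already reached $0$ or below by one carefully chosen time $n=n(x)$, and then to control that single probability uniformly in $x$ using the Central Limit Theorem (Theorem~\ref{thm:lindclt}). First I would dispose of the trivial case $x\le 0$: there $X_0 = x \le 0$, so $\stopping{\Gamma}=0$ and $\probm(\stopping{\Gamma}<\infty)=1$. It therefore remains to produce a single constant $p>0$ that works for all $x>0$.

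For $x>0$, write $S_n := \sum_{k=1}^{n} R_k$, so that $X_n = x + S_n$. Since $\{\stopping{\Gamma}<\infty\} = \{\exists n:\, X_n\le 0\} \supseteq \{S_n\le -x\}$ for every fixed $n$, it suffices to bound $\probm(S_n\le -x)$ from below for a suitable $n=n(x)$. Introducing the normalized sum $W_n := \frac{S_n - n\mu}{\sqrt n} = \sqrt n\bigl(\tfrac1n S_n-\mu\bigr)$, Theorem~\ref{thm:lindclt} yields $\lim_{n\to\infty}\probm(W_n\le z) = \Phi(z/\sigma)$ for every real $z$. Now $\probm(S_n\le -x) = \probm(W_n\le z_n)$ with $z_n := \frac{-x-n\mu}{\sqrt n} = \frac{-x}{\sqrt n} - \mu\sqrt n$; because $\mu\le 0$ the drift term $-\mu\sqrt n$ is non-negative, so whenever $\sqrt n\ge x$ we have $z_n \ge \frac{-x}{\sqrt n}\ge -1$ and hence $\probm(S_n\le -x)\ge \probm(W_n\le -1)$. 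By the CLT limit, $\probm(W_n\le -1)\to\Phi(-1/\sigma)>0$, so there is an index $N_0$ with $\probm(W_n\le -1)\ge p_0 := \tfrac12\Phi(-1/\sigma)>0$ for all $n\ge N_0$.

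The one delicate point is to satisfy the two requirements ``$n\ge N_0$'' and ``$\sqrt n\ge x$'' simultaneously and uniformly over $x$; I would achieve this by setting $n(x) := \max\{N_0,\lceil x^2\rceil\}$, which forces both $n(x)\ge N_0$ and $\sqrt{n(x)}\ge x$ for every $x>0$. Then $\probm(\stopping{\Gamma}<\infty) \ge \probm(S_{n(x)}\le -x)\ge \probm(W_{n(x)}\le -1)\ge p_0$, and taking $p := p_0$ finishes the proof. The main obstacle is exactly this uniformity in $x$: the CLT only guarantees convergence for each fixed threshold, so one must balance the growing target $-x$ against a horizon $n\sim x^2$ --- exploiting the non-positive drift so that the mean shift $-\mu\sqrt n$ never works against us --- in order to keep the effective threshold bounded while still sending $n\to\infty$.
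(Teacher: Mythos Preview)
Your proposal is correct and follows essentially the same route as the paper: apply the CLT to the normalized sum at threshold $z=-1$, use $\mu\le 0$ to push the effective threshold to at least $-1$ once $n\ge x^2$, and then pick $n$ large enough to also beat the CLT convergence index. Your presentation is in fact slightly tidier than the paper's, since you make explicit that $W_n$ does not depend on $x$ and hence the CLT index $N_0$ is uniform in $x$ (the paper writes $n_0(x)$, which obscures this point), and you handle the trivial case $x\le 0$ separately.
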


\begin{proof}
  According to the Central Limit Theorem (Theorem~\ref{thm:lindclt}),
  \[
  \lim_{n\rightarrow\infty}\probm(\sqrt{n}(\frac{X_n-x}{n}-\mu)\leq z)=\Phi(\frac{z}{\sigma})
  \]
   holds for every real number $z$. Note that
  \[ \probm(\sqrt{n}(\frac{X_n-x}{n}-\mu)\leq z)= \probm(X_n\leq \sqrt{n}\cdot z+n\cdot \mu+x )
  \leq \probm(X_n\leq \sqrt{n}\cdot z+x ).\]
  Choose $z=-1$. Then we have $\probm(X_n \leq 0) \ge \probm(X\leq -\sqrt{n}+x)$ when $n > x^2$. Now we fix a proper $\epsilon<\Phi(\frac{-1}{\sigma})$, and get $n_0(x)$ from the limit form equation such that for all $n>\max\{n_0(x),x^2\}$ we have
  \[\probm(X_n \leq 0)\ge \probm(X\leq -\sqrt{n} +x)\geq \probm(\sqrt{n}(\frac{X_n-X_0}{n}-\mu)\leq -1) \geq \Phi(\frac{-1}{\sigma})-\epsilon= p >0.  \]
  Since $X_n\le 0$ implies $Z_\Gamma<\infty$, we obtain that $\probm(Z_\Gamma<\infty)\geq p$ for every $x$.
\end{proof}

\noindent{\em Incremental Single While Loops.}
Due to the independence condition required by Central Limit Theorem, we need to consider special classes of single while loops. We say that a single while loop  $P$ in the form ~(\ref{eq:swl}) is \emph{incremental} if $Q$ is a sequential composition of assignment statements of the form $x:=x+\sum_{i=1}^{|R|}c_i\cdot r_i$ where $x$ is a program variable, $r_i$'s are sampling variables and $c_i$'s are constant coefficients for sampling variables.
We then consider incremental single while loops.
For incremental single while loops, the function $F$ for the loop body $Q$ is incremental, i.e., $F(\pv,\rv)=\pv+\mathbf{A}\cdot \rv$ for some constant matrix $\mathbf{A}\in{\Zset^{|X|\times |R|}}$.
\begin{remark}
By Example~\ref{ex:special:nonnegativity}, previous approaches cannot handle incremental single while loops with unbounded range of sampling variables (so that a supermartingale with a lower bound on its values may not exist). On the other hand, any additional syntax such as conditional branches or assignment statements like $x:=2\cdot x+r$ will result in an increment over certain program variables that is dependent on the previous executions of the program, breaking the independence condition.
\end{remark}

To prove almost-sure termination of incremental single while loops through Central Limit Theorem, we introduce the notion of \emph{linear progress functions}. Below we fix an incremental single while loop $P$ in the form ~(\ref{eq:swl}).

\begin{definition}[Linear Progress Functions]\label{def:progfun}
A \emph{linear progress function} for $P$ is a function $h:\Zset^{|X|}\rightarrow \Rset$ satisfying the following conditions:
\begin{compactitem}
\item[(L1)] there exists $\mathbf{a}\in\Rset^{|X|}$ and $c\in\Rset$ such that $h(\pv)=\mathbf{a}^{\mathrm{T}}\cdot \pv+c$ for all program valuations $\pv$;
\item[(L2)] for all program valuations $\pv$, if $\pv\models\phi$ then $h(\pv)>0$;
\item[(L3)] $\sum_{i=1}^{|R|}a_i\cdot \mu_i\le 0$ and $\sum_{i=1}^{|R|}a^2_i\cdot \sigma^2_i>0$, where
\begin{compactitem}
\item $(a_1,\dots,a_{|R|})=\mathbf{a}^{\mathrm{T}}\cdot\mathbf{A}$,
\item $\mu_i$ (resp. $\sigma^2_i$) is the mean (resp. variance) of the distribution $\Upsilon(r_i)$, for $1\le i\le |R|$.
\end{compactitem}
\end{compactitem}
\end{definition}
Intuitively, the condition (L1) says that the function should be linear; the condition (L2) specifies that if the value of $h$ is non-positive, then the program terminates; the condition (L3) enforces that the mean of $\mathbf{a}^{\mathrm{T}}\cdot\mathbf{A}\cdot \mathbf{u}$ should be non-positive, while its variance should be non-zero. The main theorem of this section is then as follows.
\begin{theorem}[Soundness]\label{thm:cltsound}
   For any incremental single while loop program $P$,  if there exists a linear progress function for $P$, then for any initial valuation $\pv_0$
we have $\probm(T< \infty)=1$.
\end{theorem}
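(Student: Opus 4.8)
The plan is to reduce almost-sure termination of $P$ to the first-passage behaviour of the one-dimensional random walk traced by the linear progress function, and then to combine Lemma~\ref{thm:clt} with the Markov property. First I would dispose of the trivial case $\pv_0\not\models\phi$, where $T=0$. Otherwise, fix $\mathbf{a},c$ and $(a_1,\dots,a_{|R|})=\mathbf{a}^{\mathrm{T}}\cdot\mathbf{A}$ from Definition~\ref{def:progfun}, and introduce the walk $\Gamma=\{X_n\}_{n\in\Nset_0}$ with $X_0:=h(\pv_0)=\mathbf{a}^{\mathrm{T}}\cdot\pv_0+c$ and $X_{n+1}:=X_n+R_n$, where $R_n:=\mathbf{a}^{\mathrm{T}}\cdot\mathbf{A}\cdot\rv_n=\sum_{i=1}^{|R|}a_i\cdot\rv_n[i]$ is formed from the very samples $\rv_n$ that the program draws. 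Since the $\rv_n$ are sampled freshly and independently at each iteration and their coordinates are independent under $\bar{\Upsilon}$, the increments $R_n$ are i.i.d.\ with $\expv(R_n)=\sum_{i=1}^{|R|}a_i\cdot\mu_i\le 0$ and $\var(R_n)=\sum_{i=1}^{|R|}a_i^2\cdot\sigma_i^2$, which is finite and strictly positive by (L3). Thus $\Gamma$ meets exactly the hypotheses of Lemma~\ref{thm:clt}.

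The next step is the coupling between this walk and non-termination, namely the inclusion $\{T=\infty\}\subseteq\{\stopping{\Gamma}=\infty\}$. On $\{T=\infty\}$ the loop guard $\phi$ holds at every step, so the increment $R_n$ is applied at every step (hence the program's progress value equals $X_n$ for all $n$), and by condition (L2) we get $X_n=h(\pv_n)>0$ for all $n$. Therefore the walk never reaches a value $\le 0$, i.e.\ $\stopping{\Gamma}=\infty$. Consequently $\probm(T=\infty)\le\probm(\stopping{\Gamma}=\infty)$, and it suffices to establish $\probm(\stopping{\Gamma}=\infty)=0$ uniformly in the starting value.

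To obtain this I would first invoke Lemma~\ref{thm:clt}, which supplies a single constant $p>0$ with $\probm_x(\stopping{\Gamma}<\infty)\ge p$ for every start $x$ (writing $\probm_x,\expv_x$ for the walk started at $X_0=x$); equivalently $f(x):=\probm_x(\stopping{\Gamma}=\infty)\le 1-p$ uniformly. To boost this bound to $f\equiv 0$, I would apply the Markov property of the random walk at time $m$: for $x>0$,
\[
f(x)=\expv_x\!\left[\mathbf{1}_{X_1>0,\dots,X_m>0}\cdot f(X_m)\right]\le (1-p)\cdot\probm_x\!\left(X_1>0,\dots,X_m>0\right).
\]
Letting $m\to\infty$ and using continuity from above, $\probm_x(X_1>0,\dots,X_m>0)\downarrow f(x)$, so $f(x)\le(1-p)\cdot f(x)$, which forces $f(x)=0$ since $1-p<1$. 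Together with the inclusion above this gives $\probm(T<\infty)=1$ for every initial valuation $\pv_0$.

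The main obstacle is precisely this last boosting step. Lemma~\ref{thm:clt} only guarantees a positive chance of \emph{eventually} dropping to $\le 0$, and the time at which this is realized grows with the current value (it is at least $x^2$ in the lemma's proof), so there is no uniform step-window on which to run a naive geometric-decay argument. The resolution is to exploit that $p$ is uniform in $x$ and to convert the eventual-hitting bound into the fixed-point inequality $f(x)\le(1-p)\cdot f(x)$ via the Markov property; everything else (the linearity bookkeeping for $R_n$ and the coupling with $\{T=\infty\}$) is routine.
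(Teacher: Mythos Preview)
Your proposal is correct and follows essentially the same route as the paper: define the one-dimensional walk $X_n=h(\pv_n)$ with i.i.d.\ increments $R_n=\mathbf{a}^{\mathrm T}\mathbf{A}\,\rv_n$, verify via (L3) that Lemma~\ref{thm:clt} applies to give a uniform lower bound $p>0$ on the hitting probability, and then upgrade this to almost-sure hitting before invoking (L2). The only difference is that the paper dispatches the upgrade step in one line by appealing to ``the recurrence property of Markov chain'', whereas you spell it out via the fixed-point inequality $f(x)\le(1-p)\,f(x)$ obtained from the Markov property and the monotone limit $\probm_x(X_1>0,\dots,X_m>0)\downarrow f(x)$; your version is more explicit but not a different idea.
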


\begin{proof}
  Let $h(\pv)=\mathbf{a}^{\mathrm{T}}\cdot \pv+c$ be a linear progress function for $P$. We define the stochastic process $\Gamma=\{X_n\}_{n\in \Nset_0}$ by $X_n=h(\pv_n)$, where $\pv_n$ is the vector of random variables that represents the program valuation at the $n$th execution step of $P$.  Define $R_n:=X_n-X_{n-1}$. We have $R_n=X_n-X_{n-1}=h(\pv_n)-h(\pv_{n-1})=h(\pv_{n-1}+\textbf{A}\cdot\rv_{n})-h(\pv_{n-1})=\mathbf{a}^{\mathrm{T}}\cdot \textbf{A}\cdot\rv_{n}$ for $n\geq 1$.  Thus, $\{R_n\}_{n\in\Nset}$ is a sequence of independent and identically distributed random variables. We have
  $\mu:=\expv(R_n)\leq0$ and $\sigma^2:=\mathbf{Var}(R_n)>0$
  by the independency of $r_i$'s and the condition (L3) in Definition~\ref{def:progfun}. Now we can apply Lemma~\ref{thm:clt} and obtain that there exists a constant $p>0$ such that for any initial program valuation $\pv_0$, we have $\probm(Z_\Gamma<\infty)\geq p$.
   By the recurrence property of Markov chain, we have $\{X_n\}$ is almost-surely stopping. Notice that from (L2), $0\geq X_n=h(\pv_n)$ implies $\pv_n \not\models\phi$ and (in the next step) termination of the single while loop. Hence,we have that $P$ is almost-surely terminating under any initial program valuation $\pv_0$.\qed
\end{proof}

Theorem~\ref{thm:cltsound} can be applied to prove almost-sure termination of while loops whose increments are independent, but the value change in one iteration is not bounded. Thus, Theorem~\ref{thm:cltsound} can handle programs which Theorem~\ref{thm:soundness} and Proposition~\ref{thm:supmextended} as well as previous supermartingale-based methods cannot.

In the following, we present several examples, showing that Theorem~\ref{thm:cltsound} can handle sampling variables with unbounded range which previous approaches cannot handle.


\begin{example}\label{ex:cltorigin}
Consider the program in Example~\ref{ex:runningexample} where we let $r$ be a two-sided geometric distribution sampling variable such that $\probm(r=k>0)=\frac{(1-p)^{k-1}p}{2}$ and $\probm(r=k<0)=\frac{(1-p)^{-k-1}p}{2}$ for some $0<p<1$. First note that by the approach in~\cite{DBLP:journals/pacmpl/AgrawalC018}, we can prove that this program has
infinite expected termination time, and thus previous ranking-supermartingale based approach cannot be applied.
Also note that the value that $r$ may take has no lower bound. This means that we can hardly obtain the almost-sure termination by finding a proper supermartingale map that satisfy both the non-negativity condition and the non-increasing condition.
Now we apply Theorem~\ref{thm:cltsound}.
Choose $h(x)=x$. It follows directly that both (L1) and (L2) hold. Since $\expv(r)=0$ for symmetric property and $0<\var(r)=\expv(r^2)-\expv^2(r)=\expv(r^2)=\expv(Y^2)=\var(Y)-\expv^2(Y)<\infty$ where $Y$ is the standard geometric distribution with parameter $p$, we have (L3) holds. Thus, $h$ is a legal linear
 progress function and
 this program is almost-sure terminating by Theorem~\ref{thm:cltsound}.\qed
\end{example}


\begin{example}\label{ex:clt}
Consider the following program with a more complex loop guard.
  \lstset{language=prog}
\begin{lstlisting}[mathescape]
$\verb"in"$ : while $y>x^2$ do
       $x:=x+r_1$;
       $y:=y+r_2$
     od
$\verb"out"$:
\end{lstlisting}

This program terminates when the point on the plane leaves the area above
the parabola by a two-dimensional random walk.
We suppose that $\mu_1=\expv(r_1), \mu_2=\expv(r_2)$ are both positive and $0<\var(r_1),\var(r_2)<\infty$ .
Now we are to prove the program is almost-surely terminating by constructing a linear progress function $h$. The existence of a linear progress function renders the result valid by Theorem~\ref{thm:cltsound}.
Let $h(x,y)=-\mu_2\cdot x+\mu_1\cdot y+\frac{\mu_2^2}{4\mu_1}$.
If $y>x^2$, then $h(x,y)>\mu_1\cdot x^2-\mu_2\cdot x+\frac{\mu_2^2}{4\mu_1}=\mu_1(x-\frac{\mu_2}{2\mu_1})^2\geq0$.
From $\textbf{a}^{\textrm{T}}\cdot \textbf{A}\cdot (\expv(r_1),\expv(r_2))^{\mathrm{T}}= -\mu_2\cdot\mu_1+\mu_1\cdot\mu_2 =0$,
we have $h$ is a legal linear progress function for $P$.
Thus, $P$ is almost-surely terminating.\qed
\end{example}

\section{Algorithmic Methods and Extensions}\label{sect:algorithmicmethods}

In this section, we discuss possible extensions for our results, such as algorithmic methods, real-valued program variables, non-determinism.

\noindent{\em Algorithmic Methods.} Since program termination is generally undecidable, algorithms for proving termination of programs require certain restrictions.
A typical restriction adopted in previous ranking-supermartingale-based algorithms~\cite{SriramCAV,DBLP:conf/cav/ChatterjeeFG16,DBLP:conf/popl/ChatterjeeFNH16,ChatterjeeNZ2017} is a fixed template for ranking supermartingales.
Such a template fixes a specific form for ranking supermartingales.
In general, a ranking-supermartingale-based algorithm first establishes a template with unknown coefficients for a ranking supermartingale.
The constraints over those unknown coefficients are inherited from the properties of the ranking supermartingale.
Finally, constraints are solved using either linear programming or semidefinite programming.

This algorithmic paradigm can be directly extended to our supermartingale-based approaches.
First, an algorithm can establish a linear or polynomial template with unknown coefficients for a supermartingale map.
Then our conditions from supermartingale maps (namely (D1)--(D4)) result in constraints on the unknown coefficients.
Finally, linear or semidefinite programming solvers can be applied to obtain the concrete values for those unknown coefficients.

For our CLT-based approach, the paradigm is more direct to apply.
We first establish a linear template with unknown coefficients.
Then we just need to find suitable coefficients such that (i) the difference has non-positive mean value and non-zero variance and (ii) the condition (D5) holds, which again reduces to linear programming.

In conclusion, previous algorithmic results can be easily adapted to our approaches.

\noindent{\em Real-Valued Program Variables.} A major technical difficulty to handle real numbers is the \emph{measurability} condition~(cf.~\cite[Chapter 3]{probabilitycambridge}).
For example, we need to ensure that our supermartingale map is measurable in some sense.
The measurability condition also affects our CLT-based approach as it is more difficult to prove the recurrence property in continuous-state-space case.
However, the issue of measurability is only technical and not fundamental, and thus we believe that our approaches can be extended to real-valued program variables and continuous samplings such as uniform or Gaussian distribution.

\noindent{\em Non-determinism.} In previous works, non-determinism is handled by ensuring related properties in each non-deterministic branch.
For examples, previous results on ranking supermartingales~\cite{SriramCAV,DBLP:conf/cav/ChatterjeeFG16,DBLP:conf/popl/ChatterjeeFNH16} ensures that
the conditions for ranking supermartingales should hold for all non-deterministic branches if we have demonic non-determinism, and for at least one non-deterministic branch if we have angelic non-determinism.
Algorithmic methods can then be adapted depending on whether the non-determinism is demonic or angelic.

Our supermartingale-based approaches can be easily extended to handle non-determinism. If we have demonic non-determinism in the single while loop, then we just ensure that
the supermartingale map satisfies the conditions (D1)--(D4) no matter which demonic branch is taken.
Similarly, for angelic non-determinism, we just require that the conditions (D1)--(D4) hold for at least one angelic branch. Then algorithmic methods can be developed to handle non-determinism.

On the other hand, we cannot extend our CLT-based approach directly to non-determinism. The reason is that under history-dependent schedulers, the sampled value at the $n$th step may not be independent of those in the previous step. In this sense, we cannot apply Central Limit Theorem since it requires the independence condition.
Hence,we need to develop new techniques to handle non-determinism in the cases from Section~\ref{sect:central}.
We leave this interesting direction as a future work.

%
%
%

 \section{Applicability of Our Approaches}

Up till now, we have illustrated our supermartingale based and Central-Limit-Theorem based approach only over single probabilistic while loops.
A natural question arises whether our approach can be applied to programs with more complex structures.
Below we discuss this point.

First, we demonstrate that our approaches can in principle be applied to all probabilistic programs without nested loops, as is done by a simple compositional argument.


 \begin{remark}[Compositionality] \label{rmk:comp}
 We note that the property of almost-sure termination for all initial program valuations are closed under sequential composition and conditional branches. Thus, it suffices to consider single while loops, and the results extend straightforwardly to all imperative probabilistic programs without nested loops.
It follows that our approaches can in principle handle all probabilistic programs without nested loops.
We plan the interesting direction of compositional reasoning for nested probabilistic loops as a future work.\qed
\end{remark}

Second, we show that our approaches cannot be directly extended to nested probabilistic loops. The following remark present the details.

\begin{remark}
%
Consider a probabilistic nested loop
\[
\textbf{while}~\phi~\textbf{do}~P~\textbf{od}
\]
where $P$ is another probabilistic while loop. On one hand, if we apply supermartingales directly to such programs, then either (i) the value of an appropriate supermartingale may grow unboundedly  below zero due to the possibly unbounded termination time of the loop $P$, which breaks the necessary non-negativity condition (see Example~\ref{ex:special:nonnegativity}), or (ii) we restrict supermartingales to be non-negative on purpose in the presence of nested loops, but then we can only handle simple nested loops (e.g., inner and outer loops do not interfere). On the other hand, the CLT-based approach rely on independence, and cannot be applied to nested loops since the nesting loop will make the increment of the outer loop not independent.\qed
\end{remark}

To summarize, while our approaches apply to all probabilistic programs {without} nested loops, new techniques beyond supermartingales and Central Limit Theorem are needed to handle general nested loops.


\section{Related Works}\label{sect:realtedwork}

We compare our approaches with other approaches on termination of probabilistic programs.
As far as we know, there are two main classes of approaches for proving termination of probabilistic programs,
namely (ranking) supermartingales and proof rules.


\noindent{\em Supermartingale-Based Approach.}
First, we point out the major difference between our approaches and ranking-supermartingale-based approaches~\cite{SriramCAV,BG05,HolgerPOPL,DBLP:conf/cav/ChatterjeeFG16,DBLP:conf/popl/ChatterjeeFNH16}.
The difference is that ranking-supermartingale-based approaches can only be applied to programs with finite expected termination time.
Although in \cite{DBLP:journals/pacmpl/AgrawalC018} a notion of lexicographic ranking supermartingales is proposed to
prove almost-sure termination of compositions of probabilistic while loops, the approach still relies on ranking supermartingales for a single loop, and thus cannot be applied to single while loops with infinite expected termination time.
In our paper, we target probabilistic programs with infinite expected termination time, and thus our approaches can handle programs that
ranking-supermartingale-based approaches cannot handle.

Then we remark on the most-related work~\cite{MclverMorgan2016} which also considered supermartingale-based approach for almost-sure termination.
Compared with our supermartingale-based approach, the approach in~\cite{MclverMorgan2016} relaxes the LBCAD condition in Proposition~\ref{thm:supm} so that a more general result on almost-sure termination is obtained but the tail bounds cannot be guaranteed, while our results can derive optimal tail bounds.
Moreover, the approach in ~\cite{MclverMorgan2016} requires that the values taken by the supermartingale should have a lower bound,
while our CLT-based approach do not require this restriction and hence can handle almost-sure terminating programs that cannot be handled in ~\cite{MclverMorgan2016}.
Finally, our supermartingale-based results are independent of~\cite{MclverMorgan2016} (see arXiv versions~\cite{DBLP:journals/corr/McIverM16} and ~\cite[Theorem 5 and Theorem 6]{DBLP:journals/corr/ChatterjeeF17}).


\smallskip\noindent{\em Proof-Rule-Based Approach.}
In this paper, we consider the supermartingale based approach for probabilistic
programs. An alternative approach is based on the notion of proof rules~\cite{DBLP:conf/esop/KaminskiKMO16,DBLP:conf/lics/OlmedoKKM16}.
In the approach of proof rules, a set of rules is proposes following which one can prove termination.
Currently, the approach of proof rules is also restricted to finite termination as the proof rules also require certain quantity to decrease in expectation,
similar to the requirement of ranking supermartingales.

\smallskip\noindent{\em Potential-Function-Based Approach.}
Recently, there is another approach through the notion of \emph{potential functions}~\cite{DBLP:conf/pldi/NgoC018}.
This approach is similar to ranking supermartingales, and can derive upper bounds for expected termination time and cost.
In principle, the major difference between the approaches of ranking supermartingales and potential functions lies in algorithmic details.
In the approach of (ranking) supermartingales, the unknown coefficients in a template are solved by linear/semidefinite programming, while the approach of potential functions solves the template through inference rules.


\section{Conclusion}\label{sect:conclusion}

In this paper, we studied sound approaches for proving almost-sure termination of probabilistic programs with integer-valued program variables.
We first presented new mathematical results for supermartingales which yield new sound approaches for proving almost-sure termination of simple probabilistic while loops. Based on the above results, we presented sound supermartingale-based approaches for proving almost-sure termination of simple probabilistic while loops.
Besides almost-sure termination, our supermartingale-based approach is the first to give (optimal) bounds on tail probabilities of non-termination within a given number of steps.
Then we proposed a new sound approach through Central Limit Theorem that can prove almost-sure termination of examples that no previous approaches can handle.
Finally, we have shown possible extensions of our approach to algorithmic methods, non-determinism, real-valued program variables, and demonstrated that in principle our approach can handle all probabilistic programs without nested loops through simple compositional reasoning. 


%





\bibliographystyle{plainurl}
\bibliography{PL}

\clearpage

\appendix

\section{Properties for Conditional Expectation}\label{app:condexpv}

Conditional expectation has the following properties for any random variables $X,Y$ and $\{X_n\}_{n\in\Nset_0}$ (from a same probability space) satisfying $\expv(|X|)<\infty,\expv(|Y|)<\infty, \expv(|X_n|)<\infty$ ($n\ge 0$) and any suitable sub-$\sigma$-algebras $\mathcal{G},\mathcal{H}$:
\begin{compactitem}
\item[(E4)] $\expv\left(\condexpv{X}{\mathcal{G}}\right)=\expv(X)$ ;
\item[(E5)] if $X$ is $\mathcal{G}$-measurable, then $\condexpv{X}{\mathcal{G}}=X$ a.s.;
\item[(E6)] for any real constants $b,d$,
\[
\condexpv{b\cdot X+d\cdot Y}{G}=b\cdot\condexpv{X}{G}+d\cdot \condexpv{Y}{G}\mbox{ a.s.;}
\]
\item[(E7)] if $\mathcal{H}\subseteq\mathcal{G}$, then $\condexpv{\condexpv{X}{\mathcal{G}}}{\mathcal{H}}=\condexpv{X}{\mathcal{H}}$ a.s.;
\item[(E8)] if $Y$ is $\mathcal{G}$-measurable and $\expv(|Y|)<\infty$, $\expv(|Y\cdot X|)<\infty$,
then
\[
\condexpv{Y\cdot X}{\mathcal{G}}=Y\cdot\condexpv{X}{\mathcal{G}}\mbox{ a.s.;}
\]
\item[(E9)] if $X$ is independent of $\mathcal{H}$, then $\condexpv{X}{\mathcal{H}}=\expv(X)$ a.s., where $\expv(X)$ here is deemed as the random variable with constant value $\expv(X)$;
\item[(E10)] if it holds a.s that $X\ge 0$, then $\condexpv{X}{\mathcal{G}}\ge 0$ a.s.;
\item[(E11)] if it holds a.s. that (i) $X_n\ge 0$ and $X_n\le X_{n+1}$ for all $n$ and (ii) $\lim\limits_{n\rightarrow\infty}X_n=X$, then
\[
\lim\limits_{n\rightarrow\infty}\condexpv{X_n}{\mathcal{G}}=\condexpv{X}{\mathcal{G}}\mbox{ a.s.}
\]
\item[(E12)] if (i) $|X_n|\le Y$ for all $n$ and (ii) $\lim\limits_{n\rightarrow\infty} X_n=X$, then
\[
\lim\limits_{n\rightarrow\infty}\condexpv{X_n}{\mathcal{G}}=\condexpv{X}{\mathcal{G}}\mbox{ a.s.}
\]
\item[(E13)] if $g:\Rset\rightarrow\Rset$ is a convex function and $\expv(|g(X)|)<\infty$, then $g(\condexpv{X}{\mathcal{G}})\le \condexpv{g(X)}{\mathcal{G}}$ a.s.
\end{compactitem}
We refer to~\cite[Chapter~9]{probabilitycambridge} for more details.

\section{Proofs for Martingale Results}\label{app:martingale}

The following version of Optional Stopping Theorem
is an extension of the one from \cite[Chapter 10]{probabilitycambridge}.
In the proof of the following theorem, for a stopping time $R$ and a non-negative integer $n\in\Nset_0$, we denote by $R\wedge n$ the random variable $\min\{R,n\}$.

\smallskip
\begin{theorem}[Optional Stopping Theorem
\protect\footnote{cf. \url{https://en.wikipedia.org/wiki/Optional_stopping_theorem}}
]~\label{thm:optstopping}
Consider any stopping time $R$ w.r.t a filtration $\{\mathcal{F}_n\}_{n\in\Nset_0}$ and any martingale (resp. supermartingale) $\{X_n\}_{n\in\Nset_0}$ adapted to $\{\mathcal{F}_n\}_{n\in\Nset_0}$.
Then $\expv\left(|Y|\right)<\infty$ and $\expv\left(Y\right)=\expv(X_0)$ (resp. $\expv\left(Y\right)\le\expv(X_0)$) if one of the following conditions hold:
\begin{compactenum}
\item there exists an $M\in (0,\infty)$ such that $\left|X_{R\wedge n}\right|<M$ a.s. for all $n\in\Nset_0$, and $Y=\lim\limits_{n\rightarrow\infty}X_{R\wedge n}$ a.s., where the existence of $Y$ follows from Doob's Convergence Theorem
    ;
\item $\expv(R)<\infty$, $Y=X_R$ and there exists a $c\in (0,\infty)$ such that for all $n\in\Nset_0$, $\condexpv{\left|X_{n+1}-X_n\right|}{\mathcal{F}_n}\le c$ a.s.
\end{compactenum}
Moreover,
\begin{compactitem}
\item[3)] if $\probm(R<\infty)=1$ and $X_n(\omega)\ge 0$ for all $n,\omega$, then  $\expv\left(X_R\right)\le\expv(X_0)$.
\end{compactitem}
\end{theorem}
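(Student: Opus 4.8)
The plan is to reduce all three parts to the single observation that the \emph{stopped} process $X^R := \{X_{R\wedge n}\}_{n\in\Nset_0}$ is itself a martingale (resp. supermartingale) adapted to $\{\mathcal{F}_n\}_{n\in\Nset_0}$. To establish this I would write the increment in telescoping form, $X_{R\wedge(n+1)}-X_{R\wedge n}=\mathbf{1}_{R>n}\cdot(X_{n+1}-X_n)$, and observe that $\mathbf{1}_{R>n}$ is $\mathcal{F}_n$-measurable since $\{R>n\}=\Omega\setminus\{R\le n\}\in\mathcal{F}_n$. Applying $\condexpv{\cdot}{\mathcal{F}_n}$ and pulling out the measurable factor via (E8) gives $\condexpv{X_{R\wedge(n+1)}}{\mathcal{F}_n}=X_{R\wedge n}+\mathbf{1}_{R>n}\cdot\condexpv{X_{n+1}-X_n}{\mathcal{F}_n}$, which equals $X_{R\wedge n}$ in the martingale case and is $\le X_{R\wedge n}$ in the supermartingale case. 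Taking expectations and using (E4) then yields $\expv(X_{R\wedge n})=\expv(X_0)$ (resp. $\le\expv(X_0)$) for every $n$. Each conclusion is obtained by passing to the limit $n\to\infty$ under the appropriate convergence theorem.

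For part (1), the uniform bound $|X_{R\wedge n}|<M$ lets me invoke Doob's Convergence Theorem to guarantee that $Y=\lim_n X_{R\wedge n}$ exists a.s., and the constant $M$ is an integrable dominating function, so the Dominated Convergence Theorem gives $\expv(Y)=\lim_n\expv(X_{R\wedge n})$. Combined with the equality (resp. inequality) from the previous paragraph this is exactly $\expv(Y)=\expv(X_0)$ (resp. $\le$), while $\expv(|Y|)\le M<\infty$.

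For part (2), since $\expv(R)<\infty$ forces $R<\infty$ a.s., I have $X_R=\lim_n X_{R\wedge n}$ pointwise a.s., and the task is to produce an integrable dominating function. The natural candidate is $W:=|X_0|+\sum_{k=0}^{R-1}|X_{k+1}-X_k|$, which dominates every $|X_{R\wedge n}|$. To check $\expv(W)<\infty$ I would rewrite $\sum_{k=0}^{R-1}|X_{k+1}-X_k|=\sum_{k=0}^{\infty}\mathbf{1}_{R>k}\cdot|X_{k+1}-X_k|$, exchange sum and expectation by Monotone Convergence, and then use (E4) together with the $\mathcal{F}_k$-measurability of $\mathbf{1}_{R>k}$ and the hypothesis $\condexpv{|X_{k+1}-X_k|}{\mathcal{F}_k}\le c$ to bound each term by $c\cdot\probm(R>k)$; summing and using $\expv(R)=\sum_{k\ge 0}\probm(R>k)$ gives $\expv(W)\le\expv(|X_0|)+c\cdot\expv(R)<\infty$. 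Dominated Convergence then delivers $\expv(X_R)=\expv(X_0)$ (resp. $\le$). I expect this to be the main obstacle, as it is the only part requiring the construction and integrability verification of a non-constant dominating function and the careful interchange of summation with expectation.

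For part (3) no dominating function is available, but non-negativity suffices: since $R<\infty$ a.s. gives $X_{R\wedge n}\to X_R$ a.s. with every term $\ge 0$, Fatou's Lemma yields $\expv(X_R)\le\liminf_n\expv(X_{R\wedge n})\le\expv(X_0)$, where the last step is the supermartingale estimate established at the outset. This produces only the inequality $\expv(X_R)\le\expv(X_0)$, exactly as asserted.
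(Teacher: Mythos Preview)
Your proposal is correct and follows essentially the same route as the paper: for item (1) the paper likewise invokes Dominated Convergence together with the stopped-process property, and for item (2) your dominating function $W=|X_0|+\sum_{k\ge 0}\mathbf{1}_{R>k}|X_{k+1}-X_k|$ and its integrability computation via Monotone Convergence, (E4), (E8), and $\sum_k\probm(R>k)=\expv(R)$ match the paper's argument almost line for line. The only cosmetic difference is that for item (3) the paper simply cites \cite[Chapter~10.10(d)]{probabilitycambridge}, whereas you supply the underlying Fatou argument explicitly.
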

\begin{proof}
The first item of the theorem follows directly from Dominated Convergence Theorem~\cite[Chapter 6.2]{probabilitycambridge} and properties for stopped processes (cf.~\cite[Chapter 10.9]{probabilitycambridge}).
The third item is from \cite[Chapter 10.10(d)]{probabilitycambridge}.
Below we prove the second item.
We have for every $n\in\Nset_0$,
\begin{eqnarray*}
\left|X_{R\wedge n}\right|&=& \left|X_0+\sum_{k=0}^{R\wedge n-1} \left(X_{k+1}-X_k\right)\right| \\
&=& \left|X_0+\sum_{k=0}^\infty \left(X_{k+1}-X_k\right)\cdot \mathbf{1}_{R>k\wedge n>k}\right|\\
&\le& \left|X_0\right|+\sum_{k=0}^\infty \left|\left(X_{k+1}-X_k\right)\cdot \mathbf{1}_{R>k\wedge n>k}\right| \\
&\le& \left|X_0\right|+\sum_{k=0}^\infty \left|\left(X_{k+1}-X_k\right)\cdot \mathbf{1}_{R>k}\right|\enskip. \\
\end{eqnarray*}
Note that
\begin{eqnarray*}
& &   \expv\left(\left|X_0\right|+\sum_{k=0}^\infty \left|\left(X_{k+1}-X_k\right)\cdot \mathbf{1}_{R>k}\right|\right) \\
&=& \mbox{(By Monotone Convergence Theorem~\cite[Chap. 6]{probabilitycambridge})} \\
& & \expv\left(\left|X_0\right|\right)+\sum_{k=0}^\infty \expv\left(\left|\left(X_{k+1}-X_k\right)\cdot \mathbf{1}_{R>k}\right|\right) \\
&=& \expv\left(\left|X_0\right|\right)+\sum_{k=0}^\infty \expv\left(\left|X_{k+1}-X_k\right|\cdot \mathbf{1}_{R>k}\right) \\
&=& \mbox{(By (E4))}\\
& & \expv\left(\left|X_0\right|\right)+\sum_{k=0}^\infty \expv\left(\condexpv{\left|X_{k+1}-X_k\right|\cdot \mathbf{1}_{R>k}}{\mathcal{F}_k}\right) \\
&=& \mbox{(by (E8))} \\
& &
\expv\left(\left|X_0\right|\right)+\sum_{k=0}^\infty \expv\left(\condexpv{\left|X_{k+1}-X_k\right|}{\mathcal{F}_k}\cdot \mathbf{1}_{R>k}\right) \\
&\le & \expv\left(\left|X_0\right|\right)+\sum_{k=0}^\infty \expv\left(c\cdot \mathbf{1}_{R>k}\right) \\
&=& \expv\left(\left|X_0\right|\right)+\sum_{k=0}^\infty c\cdot \probm\left(R>k\right) \\
&=& \expv\left(\left|X_0\right|\right)+\sum_{k=0}^\infty c\cdot \probm\left(k<R<\infty\right) \\
&=& \expv\left(\left|X_0\right|\right)+ c\cdot \expv(R) \\
&<& \infty\enskip.
\end{eqnarray*}
Thus, by Dominated Convergence Theorem~\cite[Chapter 6.2]{probabilitycambridge} and the fact that $X_R=\lim\limits_{n\rightarrow\infty} X_{R\wedge n}$ a.s.,
\[
\expv\left(X_R\right)=\expv\left(\lim\limits_{n\rightarrow\infty} X_{R\wedge n}\right)=\lim\limits_{n\rightarrow\infty}\expv\left(X_{R\wedge n}\right)\enskip.
\]
Then the result follows from properties for the stopped process $\{X_{R\wedge n}\}_{n\in\Nset_0}$ (cf.~\cite[Chapter 10.9]{probabilitycambridge}).
\end{proof}

In this section, for a random variable $R$ and a real number $M$, we denote by $R\wedge M$ the random variable $\min\{R,M\}$.

\noindent{\textbf{Proposition~\ref{thm:supm}} (Difference-bounded Supermartingales)\bf.}
Consider any difference-bounded supermartingale $\Gamma=\{X_n\}_{n\in\Nset_0}$ adapted to a filtration $\{\mathcal{F}_n\}_{n\in\Nset_0}$ satisfying the following conditions:
\begin{compactenum}
\item $X_0$ is a constant random variable;
\item for all $n\in\Nset_0$, it holds for all $\omega$ that (i) $X_n(\omega)\ge 0$ and (ii) $X_n(\omega)=0$ implies  $X_{n+1}(\omega)=0$;
\item there exists a $\delta\in(0,\infty)$ such that for all $n\in\Nset_0$, it holds a.s. that $X_n>0$ implies $\condexpv{|X_{n+1}-X_n|}{\mathcal{F}_n}\ge \delta$.
\end{compactenum}
Then $\probm(\stopping{\Gamma}<\infty)=1$ and the function $k\mapsto\probm\left(\stopping{\Gamma}\ge k\right)\in \mathcal{O}\left(\frac{1}{\sqrt{k}}\right)$.
\begin{proof}
The proof uses ideas from both~\cite[Chapter~10.12]{probabilitycambridge} and \cite[Theorem 4.1]{DBLP:journals/jcss/BrazdilKKV15}.
Let $c\in (0,\infty)$ be such that for every $n\in\Nset_0$, it holds a.s. that $|X_{n+1}-X_n|\le c$.
Let $\delta$ be given as in the statement of the theorem.
W.l.o.g, we assume that $X_0>0$.
Note that from (E13), it holds a.s. that $X_n>0$ implies
\begin{equation}\label{eq:proof:convexity}
\condexpv{{(X_{n+1}-X_n)}^2}{\mathcal{F}_n}\ge {(\condexpv{|X_{n+1}-X_n|}{\mathcal{F}_n})}^2\ge \delta^2\enskip.
\end{equation}
Fix any sufficiently small real number $t\in (0,\infty)$ such that
\[
e^{c\cdot t}-(1+c\cdot t +\frac{1}{2}\cdot c^2\cdot t^2)\left(=\sum_{j=3}^\infty\frac{(c\cdot t)^j}{{j}{!}} \right)\le \frac{\delta^2}{4}\cdot t^2\enskip.
\]
Define the discrete-time stochastic process $\{Y_n\}_{n\in\Nset_0}$ by
\begin{equation}\label{eq:proof4:definition}
Y_n:=\frac{e^{-t\cdot X_n}}{\prod_{j=0}^{n-1} \condexpv{e^{-t\cdot \left(X_{j+1}-X_{j}\right)}}{\mathcal{F}_j}}\enskip.
\end{equation}
Note that from difference-boundedness, $0<Y_n\le e^{n\cdot c\cdot t}$ a.s. for all $n\in\Nset_0$.
Then the followings hold a.s.:
\begin{eqnarray}\label{eq:proof4:martingale}
& & \condexpv{Y_{n+1}}{\mathcal{F}_n} \nonumber\\
&=& \condexpv{\frac{e^{-t\cdot X_{n+1}}}{\prod_{j=0}^{n} \condexpv{e^{-t\cdot \left(X_{j+1}-X_{j}\right)}}{\mathcal{F}_j}}}{\mathcal{F}_n} \nonumber\\
&=& \condexpv{\frac{e^{-t\cdot X_n}\cdot e^{-t\cdot \left(X_{n+1}-X_{n}\right)}}{\prod_{j=0}^{n} \condexpv{e^{-t\cdot \left(X_{j+1}-X_{j}\right)}}{\mathcal{F}_j}}}{\mathcal{F}_n} \nonumber\\
&=& \mbox{\S~By (E8), (E1) \S} \nonumber\\
& &\frac{e^{-t\cdot X_n}\cdot \condexpv{ e^{-t\cdot \left(X_{n+1}-X_{n}\right)}}{\mathcal{F}_n}}{\prod_{j=0}^{n} \condexpv{e^{-t\cdot \left(X_{j+1}-X_{j}\right)}}{\mathcal{F}_j}} \nonumber\\
&=& \frac{e^{-t\cdot X_n}}{\prod_{j=0}^{n-1} \condexpv{e^{-t\cdot \left(X_{j+1}-X_{j}\right)}}{\mathcal{F}_j}} \nonumber\\
&=& Y_n\enskip.
\end{eqnarray}
Hence, $\{Y_n\}_{n\in\Nset_0}$ is a martingale.
For every $n\in\Nset_0$, it holds a.s. that $X_n>0$ implies
\begin{eqnarray}\label{eq:proof4:exponentiation}
& &\expv\left(e^{-t\cdot \left(X_{n+1}-X_{n}\right)}\mid\mathcal{F}_n\right) \nonumber\\
&=&\expv\left(\sum_{j=0}^{\infty} \frac{(-1)^j\cdot t^j\cdot {\left(X_{n+1}-X_{n}\right)}^j}{{j}{!}}\mid\mathcal{F}_n\right) \nonumber\\
&=& \mbox{\S~By (E12) \S}\nonumber\\
& &
\sum_{j=0}^{\infty}\expv\left(\frac{(-1)^j\cdot t^j\cdot {\left(X_{n+1}-X_{n}\right)}^j}{{j}{!}}\mid\mathcal{F}_n\right) \nonumber\\
&=& \mbox{\S~By (E6) \S} \nonumber\\
& &
1-t\cdot \condexpv{X_{n+1}-X_{n}}{\mathcal{F}_n}+\frac{t^2}{2}\cdot\condexpv{(X_{n+1}-X_{n})^2}{\mathcal{F}_n}+\sum_{j=3}^{\infty}\condexpv{\frac{(-1)^j\cdot t^j\cdot {\left(X_{n+1}-X_{n}\right)}^j}{{j}{!}}}{\mathcal{F}_n}\nonumber\\
&\ge & 1+\frac{t^2}{2}\cdot\condexpv{(X_{n+1}-X_{n})^2}{\mathcal{F}_n}- \sum_{j=3}^{\infty}\frac{{(c\cdot t)}^j}{{j}{!}}~~\\
&\ge & 1+\frac{\delta^2}{4}\cdot t^2\enskip.\nonumber
\end{eqnarray}

Thus,
\begin{compactitem}
\item $|Y_{\stopping{\Gamma}\wedge n}|\le 1$ a.s. for all $n\in\Nset_0$, and
\item it holds a.s. that
\begin{equation}\label{eq:thmmartingale}
\left(\lim\limits_{n\rightarrow\infty} Y_{n\wedge \stopping{\Gamma}}\right)(\omega)=
\begin{cases}
0 & \mbox{if }\stopping{\Gamma}(\omega)=\infty\\
Y_{\stopping{\Gamma}(\omega)}(\omega) & \mbox{if }\stopping{\Gamma}(\omega)<\infty
\end{cases}\enskip.
\end{equation}
\end{compactitem}
Then from Optional Stopping Theorem (Item 1 of Theorem~\ref{thm:optstopping}), by letting $Y_\infty:=\lim\limits_{n\rightarrow\infty} Y_{n\wedge\stopping{\Gamma}}$
one has that
\[
\expv\left(Y_\infty\right)=\expv\left(Y_0\right)=e^{-t\cdot \expv(X_0)}\enskip.
\]
Moreover, from~(\ref{eq:thmmartingale}), one can obtain that
\begin{eqnarray}\label{eq:proof4:derivation}
& &\expv\left(Y_\infty\right)\nonumber\\
&=& \mbox{\S~By Definition \S} \nonumber\\
& & \int Y_\infty\,\mathrm{d}\probm \nonumber\\
&=& \mbox{\S~By Linear Property of Lebesgue Integral \S} \nonumber\\
& &\int Y_\infty\cdot\mathbf{1}_{\stopping{\Gamma}=\infty}\,\mathrm{d}\probm+\int Y_\infty\cdot\mathbf{1}_{\stopping{\Gamma}<\infty}\,\mathrm{d}\probm\nonumber\\
&=& \mbox{\S~By Monotone Convergence Theorem~\cite[Chap. 6]{probabilitycambridge} \S} \nonumber\\
& & 0\cdot\probm\left(\stopping{\Gamma}=\infty\right)+\sum_{n=0}^{\infty}\int Y_\infty\cdot\mathbf{1}_{\stopping{\Gamma}=n}\,\mathrm{d}\probm\nonumber\\
&=& \sum_{n=0}^{\infty}\int Y_n\cdot\mathbf{1}_{\stopping{\Gamma}=n}\,\mathrm{d}\probm\nonumber\\
&\le& \mbox{\S~By (\ref{eq:proof4:exponentiation}) and $X_{n}\ge 0$ \S} \nonumber\\
& & \sum_{n=0}^{\infty}\int {\left(1+\frac{\delta^2}{4}\cdot t^2\right)}^{-n} \cdot\mathbf{1}_{\stopping{\Gamma}=n}\,\mathrm{d}\probm\nonumber\\
&=& \sum_{n=0}^{\infty} {\left(1+\frac{\delta^2}{4}\cdot t^2\right)}^{-n} \cdot\probm\left(\stopping{\Gamma}=n\right) \nonumber\\
&= & \sum_{n=0}^{k-1} {\left(1+\frac{\delta^2}{4}\cdot t^2\right)}^{-n} \cdot\probm\left(\stopping{\Gamma}=n\right)+\sum_{n=k}^{\infty} {\left(1+\frac{\delta^2}{4}\cdot t^2\right)}^{-n} \cdot\probm\left(\stopping{\Gamma}=n\right)\nonumber\\
&\le & \left(1-\probm\left(\stopping{\Gamma}\ge k\right)\right)+{\left(1+\frac{\delta^2}{4}\cdot t^2\right)}^{-k}\cdot \probm\left(\stopping{\Gamma}\ge k\right)~~
\end{eqnarray}
for any $k\in\Nset$.
It follows that for all $k\in\Nset$,
\[
e^{-t\cdot\expv(X_0)}\le 1-\left(1-\left(1+\frac{\delta^2}{4}\cdot t^2\right)^{-k}\right)\cdot \probm\left( \stopping{\Gamma}\ge k\right)\enskip.
\]
Hence, for any $k\in\Nset$ and sufficiently small $t\in (0,\infty)$,
\[
\probm\left(\stopping{\Gamma}\ge k\right)\le \frac{1-e^{-t\cdot\expv(X_0)}}{1-\left(1+\frac{\delta^2}{4}\cdot t^2\right)^{-k}}\enskip.
\]
Then for sufficiently large $k\in\Nset$ with $t:=\frac{1}{\sqrt{k}}$,
\[
\probm\left(\stopping{\Gamma}\ge k\right)\le \frac{1-e^{-\frac{\expv(X_0)}{\sqrt{k}}}}{1-\left(1+\frac{\delta^2}{4}\cdot \frac{1}{k}\right)^{-k}}\enskip.
\]
Using the facts that $\lim\limits_{k\rightarrow\infty}(1+\frac{\delta^2}{4}\cdot\frac{1}{k})^{k}= e^{\frac{\delta^2}{4}}$ and $\lim\limits_{z\rightarrow 0^+}\frac{1-e^{-z}}{z}=1$, we have that the function
\[
k\mapsto\probm\left(\stopping{\Gamma}\ge k\right)\in \mathcal{O}\left(\frac{1}{\sqrt{k}}\right)\enskip.
\]
Since $\probm(\stopping{\Gamma}=\infty)=\lim\limits_{k\rightarrow\infty}\probm\left(\stopping{\Gamma}\ge k\right)$, one obtains immediately that $\probm(\stopping{\Gamma}=\infty)=0$ and $\probm(\stopping{\Gamma}<\infty)=1$.
\end{proof}

\noindent\textbf{Example~\ref{ex:special:nonnegativity}.}
In Proposition~\ref{thm:supm}, the Non-negativity condition is necessary; in other words,  it is necessary having  $X_{\stopping{\Gamma}}=0$ rather than $X_{\stopping{\Gamma}}\le 0$ when $\stopping{\Gamma}<\infty$.
This can be observed as follows.
Consider the discrete-time stochastic processes $\{X_n\}_{n\in\Nset_0}$ and $\Gamma=\{Y_n\}_{n\in\Nset_0}$ given as follows:
\begin{compactitem}
\item the random variables $X_0,\dots,X_n,\dots$ are independent, $X_0$ is the random variable with constant value $\frac{1}{2}$ and each $X_n$ ($n\ge 1$) satisfies that $\probm\left(X_n=1\right)=e^{-\frac{1}{n^2}}$ and $\probm\left(X_n=-4\cdot n^2\right)=1-e^{-\frac{1}{n^2}}$;
\item $Y_n:=\sum_{j=0}^{n}X_j$ for $n\ge 0$.
\end{compactitem}
Let the filtration $\{\mathcal{F}_n\}_{n\in\Nset_0}$ be given such that each $\mathcal{F}_n$ is the $\sigma$-algebra generated by $X_0,\dots,X_n$ (i.e., the smallest $\sigma$-algebra that makes  $X_0,\dots,X_n$ measurable).
It is straightforward to see that every $Y_n$ is integrable and $\mathcal{F}_n$-measurable, and every $X_{n+1}$ is independent of $\mathcal{F}_n$.
Thus, for $n\ge 0$
we have that (cf. properties for conditional expectation in Appendix~\ref{app:condexpv})
\begin{eqnarray*}
\condexpv{Y_{n+1}}{\mathcal{F}_n}&=& \condexpv{Y_{n}+X_{n+1}}{\mathcal{F}_n}\\
\mbox{(by (E6), (E5))} &=& Y_{n}+\condexpv{X_{n+1}}{\mathcal{F}_n}\\
\mbox{(by (E9))} &=& Y_{n}+\expv\left(X_{n+1}\right)\\
&=& Y_{n}+\left(e^{-\frac{1}{(n+1)^2}} - 4\cdot \frac{1-e^{-\frac{1}{(n+1)^2}}}{\frac{1}{(n+1)^2}}\right)\\
&\le& Y_{n}+1-4\cdot \left(1-e^{-1}\right)\\
&\le& Y_n-1.52~~~,
\end{eqnarray*}
where the first inequality is obtain by the fact that the function $x\mapsto \frac{1-e^{-x}}{x}$ is decreasing over $(0,\infty)$. Hence, $\{Y_n\}_{n\in\Nset_0}$ satisfies even the ranking condition for ranking supermartingales, thus satisfying the LBCAD condition.
However, since $Y_n<0$ once $X_n=-4\cdot n^2$, 
$\probm\left(\stopping{\Gamma}> n\right)=\prod_{j=1}^{n}e^{-\frac{1}{j^2}}$.
It follows directly that
$\probm\left(\stopping{\Gamma}=\infty\right)=\lim\limits_{n\rightarrow\infty}\probm\left(\stopping{\Gamma}> n\right)=e^{-\frac{\pi^2}{6}}>0$\enskip.

\noindent\textbf{Theorem~\ref{thm:soundness}} (Soundness)\textbf{.}
If there exists a difference-bounded supermartingale map $h$ for $P$, then for any initial valuation $\pv_0$
we have $\probm(T< \infty)=1$ and $k\mapsto\probm(T\geq k)\in \mathcal{O}\left(\frac{1}{\sqrt{k}}\right)$.

\begin{proof}
  Let $h$ be any  difference-bounded supermartingale map $h$ for the single while loop program $P$, $\pv$ be any initial valuation and $\delta, \zeta$ be the parameters in Definition~\ref{def:supmmap}.

  If $\pv\not\models\phi$, then the program $P$ terminates and we are done. Now we suppose $\pv\models\phi$.
  Let $\mathcal{M}=(S,\textbf{P})$ be the Markov chain defining the semantics of $P$, and let $\{Y_n=(\ell_n,\pv_n)\}\}_{n\in\Nset_0}$ be the stochastic process of it.
  Now we define the stochastic process $\Gamma=\{X_n\}_{n\in\Nset_0}$ adapted to $\{\mathcal{F}_n\}_{n\in\Nset_0}$  by
  $X_n=h(\ell_n,\pv_n)$.

  By Definition~\ref{def:supmmap}, we have $X_0\geq\delta>0$ by (D2.1). Moreover, suppose $\ell_{n-1}=\verb"in"$ and $\pv_{n-1}\models\phi$,  by applying (D2.2) we have $X_n\geq\delta>0$. Now we check the conditions in Proposition~\ref{thm:supm}.
  \begin{compactenum}
    \item $X_0=h(\verb"in",\pv)$ is a constant random variable
    \item $\Gamma=\{X_n\}_{n\in\Nset_0}$ is a difference-bounded supermartingale by (D3.1) and (D4). We prove it as follows.\\
           We first prove that $\expv(|X_{n}|)<\infty$ for all $n$ by induction on $n$. $\expv(X_{0})=h(\verb"in",\pv)<\infty$. Suppose that $\expv(|X_{n}|)=\expv(X_{n})<\infty$ for every $n\leq k$, now we consider $k+1$.
            For any $Y_k=(\ell,\pv)$ such that $\ell=\verb"in"$ and $\pv\models\phi$, we have
           $\expv(X_{k+1}|Y_k=(\ell,\pv))= \Sigma_{\rv\in \Zset^{|R|}} \bar{\Upsilon}(\rv)\cdot h(\verb"in",F(\pv,\rv))\leq h (\ell,\pv)=X_k<\infty $; for any $Y_k=(\verb"out",\pv)$ we have $\expv(X_{k+1}|Y_k=(\ell,\pv))= 0<\delta\leq X_k$; for any $Y_k=(\ell,\pv)$ such that $\ell=\verb"in"$ and $\pv\not\models\phi$ we have $\expv(X_{k+1}|Y_k=(\ell,\pv))= 0<\delta\leq X_k$. It tails that $\expv(X_{k+1})=\expv(\expv(X_{k+1}|Y_k=(\ell,\pv)))\leq \expv(X_k)<\infty$. Now  we have $\expv(|X_{n}|)=\expv(X_{n})\leq \expv(X_{0})<\infty$ for all $n$.\\
           Moreover, since $\expv(X_{k+1}|Y_k=(\ell,\pv))\leq X_k$ for all $Y_k$, we have $\expv(X_{k+1}|\mathcal{F}_n)\leq X_k$ a.s..
            We have proved that $\Gamma$ is a supermartingale. The condition (D4) guarantees that $\Gamma$ is difference bounded.

    \item for all $n\in\Nset_0$, it holds for all $\omega$ that \\
    (i) $X_n(\omega)\ge 0$. \\
        (ii) If $X_n=0$, then $\ell_n=\verb"out"$   which implies $\ell_{n+1}=\verb"out"$ and $X_{n+1}=0$ (D1, D2).
    \item  Let $X_n>0$, we consider the cases below.
        If $\ell_n=\verb"in"$ and  $\pv_n\not\models\phi$, then $\condexpv{|X_{n+1}-X_{n}|}{Y_n}= X_n\geq\delta$  by (D2).\\
        If $\ell_n=\verb"in"$ and $\pv_n\models\phi$, then  $\condexpv{|X_{n+1}-X_n|}{Y_n}={\Sigma_{\rv\in \Zset^{|R|}}\bar{\Upsilon}(\rv)\cdot |g(\rv)|}\geq\delta $ by (D3.2).
        It follows that $\condexpv{|X_{n+1}-X_n|}{\mathcal{F}_n}\geq\delta$ a.s..
  \end{compactenum}
   It follows that $\Gamma=\{X_n\}_{n\in\Nset_0}$ satisfy the conditions of Proposition~\ref{thm:supm}, and we have $\probm(Z_\Gamma<\infty)=1$ and $k\mapsto\probm\left(\stopping{\Gamma}\ge k\right)\in \mathcal{O}\left(\frac{1}{\sqrt{k}}\right)$. $\Gamma$ stops implies the program $P$ terminates with the same tail bound, and we are done.
\end{proof}

\noindent\textbf{Proposition~\ref{thm:supmextended}}(General Supermartingales)\textbf{.}
Consider any supermartingale $\Gamma=\{X_n\}_{n\in\Nset_0}$ adapted to a filtration $\{\mathcal{F}_n\}_{n\in\Nset_0}$ satisfying the following conditions:
\begin{compactenum}
\item $X_0$ is a constant random variable;
\item for all $n\in\Nset_0$, it holds for all $\omega$ that (i) $X_n(\omega)\ge 0$ and (ii) $X_n(\omega)=0$ implies  $X_{n+1}(\omega)=0$;
\item there exists a $\delta\in(0,\infty)$ such that for all $n\in\Nset_0$, it holds a.s. that $\condexpv{|X_{n+1}-X_n|}{\mathcal{F}_n}\ge \delta\cdot\mathbf{1}_{X_n>0}$.
\end{compactenum}
Then $\probm(\stopping{\Gamma}<\infty)=1$ and the function $k\mapsto\probm\left(\stopping{\Gamma}\ge k\right)\in \mathcal{O}\left(k^{-\frac{1}{6}}\right)$.
\begin{proof}
W.l.o.g., we assume that $X_0>0$. Let $\delta$ be given as in the statement of the theorem.
From
\[
\lim\limits_{k\rightarrow\infty}\frac{1-e^{-\frac{\expv(X_0)}{\sqrt{k}}}}{\frac{\expv(X_0)}{\sqrt{k}}}=1\mbox{ and } \lim\limits_{k\rightarrow\infty} \left(1+\frac{\delta^2}{16}\cdot \frac{1}{k}\right)^{-k}= e^{-\frac{\delta^2}{16}}\enskip,
\]
one can fix a constant natural number $N$ such that for all $k\ge N$,
\[
\frac{1-e^{-\frac{\expv(X_0)}{\sqrt{k}}}}{\frac{\expv(X_0)}{\sqrt{k}}}\le \frac{3}{2}\mbox{ and } 1-\left(1+\frac{\delta^2}{16}\cdot \frac{1}{k}\right)^{-k}\ge \frac{1-e^{-\frac{\delta^2}{16}}}{2}\enskip.
\]
Let
\[
C:= \frac{3}{2}\cdot \expv(X_0)\cdot \frac{2}{1-e^{-\frac{\delta^2}{16}}}\enskip.
\]
Choose a constant $c\in (0,1)$ such that
\[
\sum_{j=3}^{\infty} \frac{c^{j-2}}{j!}\le \frac{\delta^2}{16}\enskip.
\]
Note that from (E6), it holds a.s. for all $n$ that
\begin{eqnarray*}
& &\condexpv{|X_{n+1}-X_n|}{\mathcal{F}_n}\\
&=&\condexpv{\mathbf{1}_{X_{n+1}< X_n}\cdot (X_{n}-X_{n+1})}{\mathcal{F}_n}\\
& &{}+\condexpv{\mathbf{1}_{X_{n+1}\ge X_n}\cdot (X_{n+1}-X_{n})}{\mathcal{F}_n}\\
&\ge& \mathbf{1}_{X_n>0}\cdot \delta\enskip.
\end{eqnarray*}
Moreover, from (E5), (E6) and definition of supermartingales, it holds a.s. that
\begin{eqnarray*}
& &\condexpv{X_{n+1}-X_n}{\mathcal{F}_n}\\
&=&\condexpv{\mathbf{1}_{X_{n+1}< X_n}\cdot (X_{n+1}-X_{n})}{\mathcal{F}_n}\\
& &{}+\condexpv{\mathbf{1}_{X_{n+1}\ge X_n}\cdot (X_{n+1}-X_{n})}{\mathcal{F}_n}\\
&=&-\condexpv{\mathbf{1}_{X_{n+1}< X_n}\cdot (X_{n}-X_{n+1})}{\mathcal{F}_n}\\
& &{}+\condexpv{\mathbf{1}_{X_{n+1}\ge X_n}\cdot (X_{n+1}-X_{n})}{\mathcal{F}_n}\\
&\le & 0\enskip.
\end{eqnarray*}
It follows that for all $n$, it holds a.s. that
\begin{equation}\label{eq:supmextended:delta}
\condexpv{\mathbf{1}_{X_{n+1}< X_n}\cdot (X_{n}-X_{n+1})}{\mathcal{F}_n}\ge \mathbf{1}_{X_n>0}\cdot \frac{\delta}{2}\enskip.
\end{equation}

Let $M$ be any real number satisfying $M> \max\{\expv(X_0), \sqrt[6]{N}\}$ and define the stopping time $R_M$ w.r.t $\{\mathcal{F}_n\}_{n\in\Nset_0}$ by
\[
R_M(\omega):=  \min\{n\mid X_{n}(\omega)\le 0\mbox{ or } X_{n}(\omega)\ge M\}
\]
where $\min\emptyset:=\infty$. Define the stochastic process $\Gamma'=\{X'_n\}_{n\in\Nset_0}$ adapted to $\{\mathcal{F}_n\}_{n\in\Nset_0}$ by:
\begin{equation}\label{eq:proof:primesupermartingale}
X'_n=X_n\wedge M~\mbox{ for all } n\in\Nset_0\enskip.
\end{equation}
It is clear that $\Gamma'$ is difference-bounded.
Below we prove that $\Gamma'$ is a supermartingale.
This can be observed from the following:
\begin{eqnarray*}
& &\condexpv{X'_{n+1}}{\mathcal{F}_n}-X'_n \\
&=& \mbox{\S~By~(E5), (E6) \S} \\
& &\condexpv{X'_{n+1}-X'_n}{\mathcal{F}_n} \\
&=& \mbox{\S~By~(E6) \S}\\
& & \condexpv{\mathbf{1}_{X_n>M}\cdot \left((X_{n+1}\wedge M) - M\right)}{\mathcal{F}_n} \\
& & \quad{}+\condexpv{\mathbf{1}_{X_n\le M}\cdot \left((X_{n+1}\wedge M)-X_n\right)}{\mathcal{F}_n}\\
&\le & \mbox{\S~By~(E8), (E10) \S}\\
& & \mathbf{1}_{X_n\le M}\cdot\condexpv{X_{n+1}-X_n}{\mathcal{F}_n}\\
&\le& 0\enskip.
\end{eqnarray*}
Hence $\Gamma'$ is a difference-bounded supermartingale.
Moreover, we have that the followings hold a.s. for all $n$:
\begin{eqnarray*}
& &\mathbf{1}_{0<X'_n< M}\cdot \condexpv{|X'_{n+1}-X'_n|}{\mathcal{F}_n} \\
&=& \mbox{\S~By~(E8) \S} \\
& & \condexpv{\mathbf{1}_{0<X'_n< M}\cdot |X'_{n+1}-X'_n|}{\mathcal{F}_n}\\
&\ge & \mbox{\S~By~(E10), (E6) \S} \\
& & \condexpv{\mathbf{1}_{0<X'_n< M}\cdot \mathbf{1}_{X'_{n+1}<X'_n}\cdot \left(X'_{n}-X'_{n+1}\right)}{\mathcal{F}_n} \\
&=& \condexpv{\mathbf{1}_{0<X'_n< M}\cdot \mathbf{1}_{X_{n+1}<X_n}\cdot \left(X_{n}-X_{n+1}\right)}{\mathcal{F}_n}\\
&=& \mbox{\S~By~(E8) \S} \\
& & \mathbf{1}_{0<X'_n< M}\cdot\condexpv{ \mathbf{1}_{X_{n+1}<X_n}\cdot \left(X_{n}-X_{n+1}\right)}{\mathcal{F}_n}\\
&\ge & \mbox{\S~By~(\ref{eq:supmextended:delta}) \S} \\
& & \mathbf{1}_{0<X'_n< M}\cdot \mathbf{1}_{X_n>0}\cdot \frac{\delta}{2}\\
&=& \mathbf{1}_{0<X'_n< M}\cdot \frac{\delta}{2}\enskip.
\end{eqnarray*}
Since $\condexpv{|X'_{n+1}-X'_n|}{\mathcal{F}_n}\ge 0$ a.s. (from (E10)), we obtain that a.s.
\[
\condexpv{|X'_{n+1}-X'_n|}{\mathcal{F}_n}\ge \mathbf{1}_{0<X'_n< M}\cdot \frac{\delta}{2}\enskip.
\]
Hence, from (E13), it holds a.s. for all $n$ that
\begin{eqnarray*}
\condexpv{(X'_{n+1}-X'_n)^2}{\mathcal{F}_n}&\ge &{\left(\condexpv{|X'_{n+1}-X'_n|}{\mathcal{F}_n}\right)}^2\\
&\ge& \mathbf{1}_{0<X'_n< M}\cdot \frac{\delta^2}{4}\enskip.
\end{eqnarray*}

Now define the discrete-time stochastic process $\{Y_n\}_{n\in\Nset_0}$ by
\[
Y_n:=\frac{e^{-t\cdot X'_n}}{\prod_{j=0}^{n-1} \condexpv{e^{-t\cdot \left(X'_{j+1}-X'_{j}\right)}}{\mathcal{F}_j}}
\]
where $t$ is an arbitrary real number in $(0,\frac{c}{M^3}]$.
Note that from difference-boundedness and (E10), $0<Y_n\le e^{n\cdot M\cdot t}$ a.s. for all $n\in\Nset_0$.
Then by the same analysis in (\ref{eq:proof4:martingale}), $\{Y_n\}_{n\in\Nset_0}$ is a martingale.
Furthermore, by similar analysis in (\ref{eq:proof4:exponentiation}), one can obtain that for every $n$, it holds a.s. that $0<X'_n< M$ implies
\begin{eqnarray*}
& &\expv\left(e^{-t\cdot \left(X'_{n+1}-X'_{n}\right)}\mid\mathcal{F}_n\right) \nonumber\\
&\ge & 1+\frac{t^2}{2}\cdot\condexpv{(X'_{n+1}-X'_{n})^2}{\mathcal{F}_n}- \sum_{j=3}^{\infty}\frac{{(M\cdot t)}^j}{{j}{!}}~~\\
&\ge & 1+\frac{t^2}{2}\cdot\frac{\delta^2}{4}- t^2\cdot \sum_{j=3}^{\infty}\frac{M^j\cdot t^{j-2}}{{j}{!}}~~\\
&\ge & 1+\frac{\delta^2}{8}\cdot t^2- t^2\cdot \sum_{j=3}^{\infty}\frac{M^{-2\cdot j+6}\cdot c^{j-2}}{{j}{!}}~~\\
&\ge & 1+\frac{\delta^2}{8}\cdot t^2- t^2\cdot \sum_{j=3}^{\infty}\frac{c^{j-2}}{{j}{!}}~~\\
&\ge & 1+\frac{\delta^2}{8}\cdot t^2- t^2\cdot \frac{\delta^2}{16}~~\\
&\ge & 1+\frac{\delta^2}{16}\cdot t^2\enskip.\nonumber
\end{eqnarray*}

Thus,
\begin{compactitem}
\item $|Y_{R_M\wedge n}|\le 1$ a.s. for all $n\in\Nset_0$, and
\item it holds a.s. that
\[
\left(\lim\limits_{n\rightarrow\infty} Y_{n\wedge R_M}\right)(\omega)=
\begin{cases}
0 & \mbox{if }R_M(\omega)=\infty\\
Y_{R_M(\omega)}(\omega) & \mbox{if }R_M(\omega)<\infty
\end{cases}\enskip.
\]
\end{compactitem}
Then from Optional Stopping Theorem (Item 1 of Theorem~\ref{thm:optstopping}), by letting $Y_\infty:=\lim\limits_{n\rightarrow\infty} Y_{n\wedge R_M}$
one has that
\[
\expv\left(Y_\infty\right)=\expv\left(Y_0\right)=e^{-t\cdot \expv(X_0)}\enskip.
\]
Moreover, one can obtain that
\begin{eqnarray}\label{eq:proof4:derivation2}
& &\expv\left(Y_\infty\right)\nonumber\\
&=& \mbox{\S~By Definition \S} \nonumber\\
& & \int Y_\infty\,\mathrm{d}\probm \nonumber\\
&=& \mbox{\S~By Linear Property of Lebesgue Integral \S} \nonumber\\
& &\int Y_\infty\cdot\mathbf{1}_{R_M=\infty}\,\mathrm{d}\probm+\int Y_\infty\cdot\mathbf{1}_{R_M<\infty}\,\mathrm{d}\probm\nonumber\\
&=& \mbox{\S~By Monotone Convergence Theorem~\cite[Chap. 6]{probabilitycambridge} \S} \nonumber\\
& & 0\cdot\probm\left(R_M=\infty\right)+\sum_{n=0}^{\infty}\int Y_\infty\cdot\mathbf{1}_{R_M=n}\,\mathrm{d}\probm\nonumber\\
&=& \sum_{n=0}^{\infty}\int Y_n\cdot\mathbf{1}_{R_M=n}\,\mathrm{d}\probm\nonumber\\
&\le& \mbox{\S~By $X'_{n}\ge 0$ \S} \nonumber\\
& & \sum_{n=0}^{\infty}\int {\left(1+\frac{\delta^2}{16}\cdot t^2\right)}^{-n} \cdot\mathbf{1}_{R_M=n}\,\mathrm{d}\probm\nonumber\\
&=& \sum_{n=0}^{\infty} {\left(1+\frac{\delta^2}{16}\cdot t^2\right)}^{-n} \cdot\probm\left(R_M=n\right) \nonumber\\
&= & \sum_{n=0}^{k-1} {\left(1+\frac{\delta^2}{16}\cdot t^2\right)}^{-n} \cdot\probm\left(R_M=n\right)+\sum_{n=k}^{\infty} {\left(1+\frac{\delta^2}{16}\cdot t^2\right)}^{-n} \cdot\probm\left(R_M=n\right)\nonumber\\
&\le & \left(1-\probm\left(R_M\ge k\right)\right)+{\left(1+\frac{\delta^2}{16}\cdot t^2\right)}^{-k}\cdot \probm\left(R_M\ge k\right)~~
\end{eqnarray}
for any $k\in\Nset$.
It follows that for all $k\in\Nset$,
\[
e^{-t\cdot\expv(X_0)}\le 1-\left(1-\left(1+\frac{\delta^2}{16}\cdot t^2\right)^{-k}\right)\cdot \probm\left( R_M\ge k\right)\enskip.
\]
Hence, for any $k\in\Nset$ and $t\in \left(0,\frac{c}{M^3}\right]$,
\[
\probm\left(R_M\ge k\right)\le \frac{1-e^{-t\cdot\expv(X_0)}}{1-\left(1+\frac{\delta^2}{16}\cdot t^2\right)^{-k}}\enskip.
\]
Then for any natural number $k\ge\frac{M^6}{c^2}$ with $t:=\frac{1}{\sqrt{k}}$,
\[
\probm\left(R_M\ge k\right)\le \frac{1-e^{-\frac{\expv(X_0)}{\sqrt{k}}}}{1-\left(1+\frac{\delta^2}{16}\cdot \frac{1}{k}\right)^{-k}}\enskip.
\]
In particular, we have that for all natural numbers $k\ge\frac{M^6}{c^2}$,
\[
\probm\left(R_M\ge k\right)\le C\cdot\frac{1}{\sqrt{k}}\enskip.
\]
Since $\probm(R_M=\infty)=\lim\limits_{k\rightarrow\infty}\probm\left(R_M\ge k\right)$, one obtains that $\probm(R_M=\infty)=0$ and $\probm(R_M<\infty)=1$.
By applying the third item of Optional Stopping Theorem (cf. Theorem~\ref{thm:optstopping}), one has that
$\expv(X_{R_M})\le \expv(X_0)$.
Thus, by Markov's Inequality,
\[
\probm(X_{R_M}\ge M)\le \frac{\expv(X_{R_M})}{M}\le \frac{\expv(X_0)}{M}\enskip.
\]

Now for any natural number $k$ such that $M:=\sqrt[6]{c^2\cdot k}>\max\{\expv(X_0),\sqrt[6]{N}\}$, we have
\begin{eqnarray*}
& &\probm(\stopping{\Gamma}\ge k) \\
&=&\probm(\stopping{\Gamma}\ge k\wedge X_{R_M}=0) + \probm(\stopping{\Gamma}\ge k\wedge X_{R_M}\ge M)\\
&=&\probm(R_M\ge k\wedge X_{R_M}=0) + \probm(\stopping{\Gamma}\ge k\wedge X_{R_M}\ge M)\\
&\le& \probm(R_M\ge k) + \probm(X_{R_M}\ge M)\\
&\le& \frac{C}{\sqrt{k}}+\frac{\expv(X_0)}{M} \\
&=& \frac{C}{\sqrt{k}}+\frac{\expv(X_0)}{\sqrt[6]{c^2\cdot k}}\enskip.
\end{eqnarray*}
It follows that $\probm(\stopping{\Gamma}<\infty)=1$ and $k\mapsto\probm\left(\stopping{\Gamma}\ge k\right)\in \mathcal{O}\left(k^{-\frac{1}{6}}\right)$.
\end{proof}

\end{document}